\documentclass[a4paper,UKenglish,cleveref, autoref]{lipics-v2021}
\hideLIPIcs

\usepackage{amsfonts, amsmath, amssymb, amsthm}
\usepackage{hyperref}
\usepackage{verbatim}
\usepackage[noend]{algpseudocode}
\usepackage{xspace,enumerate}
\usepackage{caption}
\usepackage{subcaption}
\usepackage{cancel}

\usepackage{tikz}
\usetikzlibrary{calc}

\usepackage[short,nocomma]{optidef}
\usepackage{cases}
\usepackage{nicefrac}
\usepackage[utf8]{inputenc}
\usepackage{multirow}
\usepackage{url}
\usepackage{mathtools}
\usepackage[utf8]{inputenc}
\usepackage{color}
\usepackage{scalerel}
\usepackage{todonotes}
\theoremstyle{plain}
\newtheorem{fact}[theorem]{Fact}

\def\dd{\mathinner{.\,.}}
\newcommand{\cO}{\mathcal{O}}
\newcommand{\Oh}{\mathcal{O}}
\newcommand{\Minimizers}{\mathcal{M}_{w,k,\rho}}
\newcommand{\AL}{\textsf{Prepend}\xspace}
\newcommand{\AR}{\textsf{Append}\xspace}
\newcommand{\RL}{\textsf{DeleteFirst}\xspace}
\newcommand{\RR}{\textsf{DeleteLast}\xspace}

\nolinenumbers

\title{Minimizers in Semi-Dynamic Strings}
\author{Wiktor Zuba}{University of Warsaw, Poland \and CWI, Amsterdam, The Netherlands}{w.zuba@mimuw.edu.pl}{https://orcid.org/0000-0002-1988-3507}{Supported by the Netherlands Organisation for Scientific Research (NWO) through Gravitation-grant NETWORKS-024.002.003.}
\author{Oded Lachish}{Birkbeck, University of London, London, UK}{o.lachish@bbk.ac.uk}{https://orcid.org/0000-0001-5406-8121}{}
\author{Solon P. Pissis}{CWI, Amsterdam, The Netherlands\and Vrije Universiteit, Amsterdam, The Netherlands}{solon.pissis@cwi.nl}{https://orcid.org/0000-0002-1445-1932}{Supported in part by the PANGAIA and ALPACA projects that have received funding from the European Union’s Horizon 2020 research and innovation programme under the Marie Skłodowska-Curie grant agreements No 872539 and 956229, respectively.}

\Copyright{Wiktor Zuba, Oded Lachish, Solon P. Pissis} 

\ccsdesc[100]{Theory of computation~Pattern matching}
\keywords{string algorithms, sampling, minimizers, dynamic strings, sublinear space}

\funding{A research visit during which part of the presented ideas were conceived was funded by a Royal Society International Exchanges Award.}

\EventEditors{}
\EventNoEds{1}
\EventLongTitle{}
\EventShortTitle{}
\EventAcronym{}
\EventYear{}
\EventDate{}
\EventLocation{}
\EventLogo{}
\SeriesVolume{}
\ArticleNo{}

\authorrunning{W. Zuba et al.}

\begin{document}
\maketitle

\begin{abstract}
Minimizers sampling is one of the most widely-used mechanisms for sampling strings.
Let $S=S[0]\ldots S[n-1]$ be a string over an alphabet $\Sigma$.
In addition, let $w\geq 2$ and $k\geq 1$ be two integers and $\rho=(\Sigma^k,\leq)$ be a total order on $\Sigma^k$.
The minimizer of window $X=S[i\dd i+w+k-2]$ is the smallest position in $[i,i+w-1]$ where the smallest length-$k$ substring of $S[i\dd i+w+k-2]$ based on $\rho$ starts.
The set of minimizers for all $i\in[0,n-w-k+1]$ is the set $\Minimizers(S)$ of the minimizers of $S$. The set $\Minimizers(S)$ can be computed in $\cO(n)$ time. The folklore algorithm for this computation
computes the minimizer of every window in $\cO(1)$ amortized time using $\cO(w)$ working space.

It is thus natural to pose the following two questions: 
\begin{center}
\textbf{Question 1}: \emph{Can we efficiently support other dynamic updates on the window?}
\end{center}
\begin{center}
\textbf{Question 2}: \emph{Can we improve on the $\cO(w)$ working space?}
\end{center}
\noindent We answer both questions in the affirmative:

\begin{enumerate}
    \item We term a string $X$ \emph{semi-dynamic} when one is allowed to insert or delete a letter at any of its ends. We show a data structure that maintains a semi-dynamic string $X$ and supports minimizer queries in $X$ in $\cO(1)$ time with amortized $\cO(1)$ time per update operation. 
    \item We show that this data structure can be modified to occupy strongly sublinear space without increasing the asymptotic complexity of its operations. To the best of our knowledge, this yields the first algorithm for computing $\Minimizers(S)$ in $\cO(n)$ time \emph{using $\cO(\sqrt{w})$ working space}.
\end{enumerate}

We complement our theoretical results with a concrete application and an experimental evaluation.
\end{abstract}

\newpage

\section{Introduction}
Minimizers is a mechanism for string sampling that has been introduced independently by Schleimer et
al.~\cite{DBLP:conf/sigmod/SchleimerWA03} and by Roberts et al.~\cite{DBLP:journals/bioinformatics/RobertsHHMY04}. Since then, it has been used ubiquitously in modern
sequence analysis applications underlying some of the most widely used tools~\cite{DBLP:journals/bioinformatics/Li16a,DBLP:journals/bioinformatics/Li18,Kraken}.
Minimizers also form the basis of more involved sampling schemes, such as syncmers~\cite{syncmers}, strobemers~\cite{strobemers}, mod-minimizers~\cite{DBLP:conf/wabi/KoerkampP24}, and bd-anchors~\cite{DBLP:journals/tkde/LoukidesPS23}. For more information, see~\cite{DBLP:journals/csur/ChikhiHM21,DBLP:journals/jcb/ZhengMK23}.

We start by providing the standard definition of minimizers.

\begin{definition}[Minimizers]
Let $S = S[0] \ldots S[n-1]$ be a string over an alphabet $\Sigma$, $w \geq 2$ and $k \geq 1$ be two integers, and $\rho=(\Sigma^k,\leq)$ be a total order on $\Sigma^k$. The \emph{minimizer} of the window $X=S[i \dd i + w + k - 2]$ of $S$ is the
smallest position in $[i, i + w - 1]$ where the lexicographically smallest length-$k$ substring of
$S[i \dd i + w + k - 2]$ starts based on $\rho$. The set of minimizers for all $i\in[0,n-w-k+1]$ is the set $\Minimizers(S)$ of the minimizers of $S$.   
\end{definition}

Two standard instantiations of the $\rho$ order are the lexicographic order (e.g., when $\Sigma$ is a totally ordered alphabet~\cite{DBLP:conf/cpm/0001ALP24}) and the total order implied by the Karp-Rabin (KR) rolling hash of the elements of $\Sigma^k$~\cite{DBLP:journals/ibmrd/KarpR87}. For efficiency purposes, practitioners mostly use the latter order. We will henceforth assume the KR order for efficiency unless explicitly stated otherwise. The following proposition is well-known about computing the set $\Minimizers(S)$.

\begin{proposition}[e.g., see~\cite{DBLP:journals/tkde/LoukidesPS23}]\label{the:minimizers}
For any string $S$ of length $n$ over an alphabet $\Sigma$, two integers $w\ge 2$ and $k\ge 1$, and an order $\rho=(\Sigma^k,\leq)$, $\Minimizers(S)$ can be computed in $\cO(n)$ time.    
\end{proposition}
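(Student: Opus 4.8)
The plan is to reduce the computation of $\Minimizers(S)$ to a sliding-window-minimum problem over a precomputed array of ranks. First I would assign to every starting position $j\in[0,n-k]$ an integer rank $r_j$ encoding the position of the length-$k$ substring $S[j\dd j+k-1]$ in the order $\rho$, so that $r_j\le r_{j'}$ if and only if $S[j\dd j+k-1]\le S[j'\dd j'+k-1]$ under $\rho$. Under the KR order this array is produced directly in $\cO(n)$ time by computing the Karp--Rabin hash of each length-$k$ window through a rolling update and comparing hash values; under the lexicographic order the same can be achieved in $\cO(n)$ time by building the suffix array (and LCP array) of $S$ over an integer alphabet and reading off the rank of each length-$k$ prefix. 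In either case, after this preprocessing any two length-$k$ substrings are compared in $\cO(1)$ time by comparing their ranks.

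Once the array $r_0,\dots,r_{n-k}$ is available, the minimizer of the window starting at $i$ is exactly the smallest index $j\in[i,i+w-1]$ minimizing $r_j$. I would compute all these window minima with the standard monotonic-deque algorithm: maintain a double-ended queue of candidate indices whose ranks are nondecreasing from front to back. When advancing the window to the right to include a new position $j$, I would pop from the back every index whose rank is \emph{strictly} greater than $r_j$, then push $j$, and finally pop from the front any index that has fallen outside $[i,i+w-1]$. The front of the deque then holds the answer for the current window. Since every index is pushed and popped at most once, this phase runs in $\cO(n)$ total time, i.e.\ amortized $\cO(1)$ per window, giving an overall $\cO(n)$ bound.

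The step that needs care is the tie-breaking rule, since the minimizer is defined as the \emph{smallest} position attaining the minimum rank. Popping from the back only on a \emph{strict} inequality guarantees that whenever several positions in the window share the minimum rank, the leftmost of them stays closest to the front and is reported; popping on $\ge$ would instead retain the rightmost occurrence and yield a wrong answer. The main obstacle is really the preprocessing that justifies $\cO(1)$-time comparisons for a general $\rho$: for the two standard orders it is discharged by Karp--Rabin hashing or by suffix-array construction as above, which is precisely why the statement is phrased in terms of these orders.
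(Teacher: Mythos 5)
Your proposal is correct and follows essentially the same route as the paper: Karp--Rabin rolling fingerprints to get $\cO(1)$-time comparisons of length-$k$ substrings, followed by the folklore monotonic-deque sliding-window-minimum algorithm with amortized $\cO(1)$ time per window, and your strict-inequality popping rule matches the paper's domination condition ($p'>p$ and $v'<v$), which correctly keeps the leftmost position among ties. The suffix-array treatment of the lexicographic order is an addition beyond what the paper spells out, but it is consistent with the cited instantiations and does not change the core argument.
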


In particular, if we can compute the KR fingerprint (KRF) of a length-$k$ substring of $S$ in $\cO(1)$ amortized time, we can apply a folklore algorithm\footnote{For example, see the description at \url{https://codeforces.com/blog/entry/71687}.}, which computes the minimum elements in a sliding window of
size $w$ in $\cO(1)$ amortized time per window using $\cO(w)$ working space (space on top of $S$). 
It is thus natural to pose the following two questions:
\begin{center}
\textbf{Question 1}: \emph{Can we efficiently support other dynamic updates on the window?}
\end{center}
\begin{center}
\textbf{Question 2}: \emph{Can we improve on the $\cO(w)$ working space}?
\end{center}

\subparagraph{Problem Formulation}

Let $X$ be a string of length $\ell$ over alphabet $\Sigma$. 
The string $X$ is, in general, called \emph{dynamic} in the literature~\cite{DBLP:conf/soda/GuFB94,DBLP:journals/talg/AmirLLS07,DBLP:conf/cpm/CliffordGK0U22} when it can be modified through three different types of \emph{edit} operations: letter insertion, deletion, or substitution. Here, we consider a \emph{semi-dynamic} model, where one is allowed to insert or delete a letter at any of its ends.

More formally, we define the four following types of modifications that are allowed on $X$:
\begin{itemize}
\item \AL($a$) that updates the string $X$ into $aX$, for $a\in\Sigma$;
\item \AR($a$) that updates the string $X$ into $Xa$, for $a\in\Sigma$;
\item \RL() that updates the string $X=aX'$ into $X'$, for $a\in\Sigma$;
\item \RR() that updates the string $X=X'' a$ into $X''$, for $a\in\Sigma$.
\end{itemize}

We will collectively call the above four types of modifications \emph{border modifications} and such a string $X$ a \emph{semi-dynamic} string. A \emph{semi-dynamic minimizer} data structure, for a fixed integer $k\geq 1$ and a semi-dynamic string $X$, is then a data structure that maintains $X$ to support queries asking for the minimizer of $X$; a minimizer query must return the \emph{position}
(and the KRF \emph{value} if needed) of the corresponding substring.
Such a data structure should be much more efficient than the naive comparison of all its length-$k$ substrings.
Note that border modifications form a subset of the most commonly used edits, and as such, problems concerning those can be answered using more general solutions (see \cref{sec:general-queue}). 

\subparagraph{Our Results and Roadmap}

Interestingly, it turns out that while the problem of a \emph{dynamic minimizer} data structure (i.e., the problem for a \emph{dynamic} string $X$) does not seem solvable faster than $\Oh(\log \ell)$ time per edit, in the case of border modifications, a faster solution is indeed possible.
At the same time, a semi-dynamic minimizer data structure is already quite powerful, as it generalizes the most commonly used sliding-window approach (cf.~\cref{the:minimizers}) and the computation of minimizers on a trie~\cite{DBLP:conf/icde/Gabory0LPZ24}, for which a standard  approach would use \cref{the:minimizers} on every leaf-to-root path.
In particular, we make the following contributions:

\begin{enumerate}
    \item In \cref{sec:prel}, we show a warm-up dynamic minimizer data structure that supports $\cO(1)$-time minimizer queries with: (i) $\Oh(\log \ell)$ time per border modification; and (ii) $\Oh(k\log \ell)$ time per general edit. We also provide evidence as to why any improvement over the cost for general edits is unlikely (and, in fact, impossible in the comparison model). 
    \item In \cref{sec:DS}, we show a semi-dynamic minimizer data structure that supports $\cO(1)$-time minimizer queries with amortized $\cO(1)$ time per update operation.
    \item In \cref{sec:SE}, we show that the above semi-dynamic minimizer data structure can be modified to occupy strongly sublinear space without increasing the asymptotic complexity of its operations. To the best of our knowledge, this yields the first algorithm for computing $\Minimizers(S)$ in $\cO(n)$ time using \emph{using $\cO(\sqrt{w})$ working space}.
    \item In \cref{sec:exp}, we present a concrete application of our data structure from \cref{sec:DS} for computing the set of minimizers for all the length-$\ell$ \emph{paths} of a trie~\cite{DBLP:conf/icde/Gabory0LPZ24}. We complement this with an experimental evaluation that demonstrates the benefits of our approach.
\end{enumerate}

\subparagraph{Related Work} The sliding-window and the semi-dynamic models have been considered for maintaining the suffix tree~\cite{DBLP:journals/algorithms/BrodnikJ18,DBLP:journals/cacm/FialaG89,DBLP:conf/dcc/Larsson96}, the directed acyclic word graph~\cite{DBLP:journals/jda/InenagaSTA04,DBLP:conf/spire/SenftD08}, and for numerous other string-processing tasks~\cite{DBLP:journals/tcs/AkagiKMNIBT22,DBLP:journals/iandc/CrochemoreHKMPR20,DBLP:journals/algorithmica/MienoFNIBT22,DBLP:conf/iwoca/MienoF22,DBLP:journals/ipl/MienoWNIBT22,DBLP:conf/cpm/BannaiCR24}. In fact, the first works on dynamic variants of the longest common subsequence and edit distance problems considered updates that were a subset of prepend, append, and delete the first or last letter~\cite{DBLP:journals/jacm/Thorup07,DBLP:journals/siamcomp/LandauMS98,DBLP:journals/jda/KimP04,DBLP:conf/fct/IshidaIST05}.

\section{Preliminaries}\label{sec:prel}

An \emph{alphabet} $\Sigma$ is a finite set of elements called \emph{letters}.
A sequence $X[0]X[1] \ldots X[n-1]$ of letters from $\Sigma$ is called a \emph{string} $X$ of length $n=|X|$.
By $X[i \dd j] = X[i]X[i+1] \ldots X[j]$, we denote a \emph{fragment} of $X$. Strings that are fragments of $X$ are called \emph{substrings} of $X$.

In \cref{sec:string-comp}, we detail how the substrings of $X$ can be efficiently compared.
In \cref{sec:occs}, we detail how the starting positions of the substrings of $X$ are maintained.
In \cref{sec:general-queue}, we present a simple
dynamic minimizer data structure and explain why it is unlikely to be improved.
In \cref{sec: Sliding window}, we present the sliding-window approach for computing $\Minimizers(S)$.

\subsection{Substring Comparison}\label{sec:string-comp}

The cost of the modifications of the data structures presented in the paper is determined by the number of length-$k$ string comparisons.

The techniques used in the solutions presented in the paper do not impose restrictions on the alphabet $\Sigma$, 
other than the fact that we are given an order $\rho=(\Sigma^k,\leq)$. 

A naive comparison of length-$k$ strings imposes a multiplicative $\Oh(k)$ factor in the time complexities. To avoid this problem, we define $\rho=(\Sigma^k,\leq)$ through the corresponding KRF values. In our setting, the KRF of the new, removed or modified length-$k$ fragment can be computed in $\Oh(1)$ time by modifying the stored KRF of its neighbouring length-$k$ fragment (occurring one position earlier or later).

Throughout the paper we store the handle to substring comparison (either a pointer to its position in the string or the KRF) as a \emph{value} which occupies $\Oh(1)$ space and can be computed in $\Oh(1)$ time. Henceforth, in the complexity analysis of the algorithms, we assume that the $\Oh(1)$ time comparison of the KRF is used.

\subsection{Starting Positions of the Substrings}\label{sec:occs}

By modifying $X$ with $\AL(a)$ or $\RL()$ operations, we shift the starting positions (relative to the beginning of the string) of all substrings without actually altering them in any way.
Since data structures store information about the positioning of substrings and updating all of those would be costly, we always refer to the starting position relative to the staring position of the string in its initial state (in particular, the starting position of a substring can be negative). The relative position of the smallest substring (the minimizer) is obtained by subtracting from the answer the position of the first letter of $X$ (which is stored along the structure and updated in $\Oh(1)$ time) for border modifications or by more complicated data structures for edits. Henceforth, we assume that this approach is shared by all our structures and always refer to this immutable starting position of a substring.

\subsection{General Priority-Queue Approach}\label{sec:general-queue}

Upon inserting a letter to string $X$, we produce a new length-$k$ fragment (when $|X|\ge k$), and upon asking the minimizer query, we want to find the length-$k$ substring with the smallest value -- this alone suggests that the dynamic minimizer data structure is some sort of priority queue storing pairs $(\emph{position}, \emph{value})$ sorted first by value and then by position in the case of substring ties: two or more fragments of $X$ can correspond to one substring of $X$.

Standard implementations of priority queues~\cite{DBLP:books/mg/CormenLRS01}, such as a \emph{heap} data structure or \texttt{C++} \texttt{std::set}, allow efficient \textsf{insert}, \textsf{delete}, and \textsf{find-min} operations. In such data structures, the two modification operations are performed in $\Oh(\log \ell)$ time (where $w=\ell - k +1\leq \ell$ is the number of elements stored in the structure); the smallest element is found in $\Oh(1)$ time.

This immediately gives us the following straightforward result.

\begin{fact}
The general priority-queue implementation of a dynamic minimizer data structure for a fixed integer $k\geq 1$ and a dynamic string $X$ of length $\ell$ supports $\Oh(1)$-time minimizer queries and performs border modifications in $\Oh(\log \ell)$ time and edits in $\Oh(k\log \ell)$ time.
\end{fact}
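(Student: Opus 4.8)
The plan is to realize the dynamic minimizer data structure as a standard priority queue storing, for each length-$k$ fragment of $X$, a pair $(\emph{position}, \emph{value})$, where the ordering is first by $\emph{value}$ (the KRF, which by \cref{sec:string-comp} encodes the $\rho$-order and is comparable in $\Oh(1)$ time) and then by $\emph{position}$ to break ties in favour of the smallest starting index, as the minimizer definition requires. Using any balanced implementation such as a heap or \texttt{std::set}~\cite{DBLP:books/mg/CormenLRS01}, the structure supports \textsf{insert} and \textsf{delete} in $\Oh(\log \ell)$ time, since it holds at most $w=\ell-k+1\le\ell$ elements, and \textsf{find-min} in $\Oh(1)$ time. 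The minimizer query is then exactly a \textsf{find-min}, so the $\Oh(1)$-time query bound is immediate once the invariant on the stored pairs is maintained.

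The first step is to handle the four border modifications. For \AR($a$), appending a letter may create one new length-$k$ fragment (precisely when $|X|\ge k$ after the update); by the discussion in \cref{sec:string-comp}, its KRF is obtained in $\Oh(1)$ time by rolling the stored KRF of the adjacent fragment, after which a single \textsf{insert} costs $\Oh(\log\ell)$. Symmetrically, \AL($a$) creates at most one new fragment at the left end, and by the relative-position convention of \cref{sec:occs} we may assign it a (possibly negative) immutable position without touching any other stored element, again at the cost of one $\Oh(1)$-time KRF computation and one $\Oh(\log\ell)$-time \textsf{insert}. The deletions \RL() and \RR() each destroy at most one length-$k$ fragment, which is removed by a single \textsf{delete} in $\Oh(\log\ell)$ time. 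Thus every border modification reduces to $\Oh(1)$ priority-queue operations, giving the claimed $\Oh(\log\ell)$ bound.

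The second step is to bound the cost of a general edit, i.e. a substitution, insertion, or deletion of a letter at an arbitrary interior position. The key observation is that a single interior change affects the value of every length-$k$ fragment whose window contains the edited position, and there are $\Oh(k)$ such fragments. Each of these must be deleted and re-inserted with its updated KRF, and although each fragment's new KRF is computable in $\Oh(1)$ time, each re-insertion still costs $\Oh(\log\ell)$; summing over the $\Oh(k)$ affected fragments yields the stated $\Oh(k\log\ell)$ bound per edit. The main point requiring care here is the accounting of exactly which fragments are touched and the bookkeeping of their positions after an insertion or deletion shifts indices, but this is routine given the relative-position mechanism of \cref{sec:occs}; no single step presents a genuine obstacle, which is why the result is labelled straightforward.
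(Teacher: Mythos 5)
Your overall route---a balanced priority queue on $(\emph{position},\emph{value})$ pairs ordered by value then position, one \textsf{insert}/\textsf{delete} per border modification, and $\Oh(k)$ delete/re-insert operations per interior edit---is exactly the paper's proof, and the complexity accounting ($w=\ell-k+1\le\ell$ elements, $\Oh(1)$-time rolling KRFs, $\Oh(\log\ell)$ per queue operation) is right. However, there is one genuine gap, in your final step: you claim that the re-indexing caused by an interior insertion or deletion is ``routine given the relative-position mechanism of \cref{sec:occs}.'' That mechanism handles only \emph{uniform} shifts: it stores a single global offset (the current position of the first letter of $X$) and subtracts it at query time, which works precisely because \AL() and \RL() shift \emph{all} stored positions by the same amount. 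An interior insertion or deletion shifts only the positions of fragments starting at or after the edit point, so the true positions of the stored pairs no longer differ from the stored immutable positions by a common offset, and no single global offset can convert a stored position into the position that must be reported (nor keep the stored positions consistent with newly inserted ones). The paper closes exactly this hole with an additional structure---e.g., a segment tree over positions supporting $\Oh(\log \ell)$-time interval updates---that records the accumulated, position-dependent shifts; this fits within the $\Oh(k\log\ell)$ budget, so the stated bounds survive, but some such device (and not the mechanism of \cref{sec:occs}) is what actually makes the interior-edit case work. Note the issue does not arise for substitutions, which change values but no positions, so your argument is complete for those.
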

\begin{proof}
Recall that at any moment we store at most $w=\ell - k +1\leq \ell$ elements.
Upon inserting (resp.~deleting) a new length-$k$ fragment, we can compute its value in $\Oh(1)$ time, and thus we know exactly what $(\emph{position}, \emph{value})$ pairs should be inserted (resp.~deleted). 

For border modifications, exactly one such pair is inserted (resp.~deleted), hence such an operation can be performed in $\Oh(\log \ell)$ time in a heap implementation of priority queue.

The case of general edits is a little more complicated -- a single letter change can affect up to $k$ substrings of $X$, hence the modification of the structure requires deleting and inserting up to $k$ pairs, which gives us $\Oh(k\log \ell)$ time per modification.
Additionally, insertions or deletions in the middle of $X$ change the positioning of the substrings in a non-consistent way. However, this problem can be solved with a data structure (e.g., a segment tree) that stores these position shifts and supports $\Oh(\log \ell)$ time interval updates~\cite{DBLP:books/lib/BergCKO08}.

In both cases, finding the minimizer can be performed by simply asking for the smallest pair (possibly stored along the structure).
\end{proof}

As mentioned in the proof, dealing with the internal changes is more involved and costly, in particular in some settings the $\Oh(\log \ell)$ time per update cannot be avoided (even if we do not need to care for the uneven positioning), as illustrated by the following lemma.

\begin{lemma}
In the comparison model, any dynamic minimizer data structure, for a fixed integer $k\geq 1$ and a semi-dynamic string $X$ of length $\ell$ over a general totally ordered alphabet, cannot support substitution modifications in $o(\log \ell)$ amortized time. 
\end{lemma}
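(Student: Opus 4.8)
The plan is to prove this lower bound by reducing comparison-based sorting to the data structure, exploiting that a minimizer query exposes the position of the minimum. I would first reduce to the simplest case $k=1$, which is legitimate since the bound must hold for every fixed $k\ge 1$; with $k=1$ a minimizer query on $X$ simply returns the position (and value) of the smallest letter of the current string. Suppose, towards a contradiction, that some data structure supports $\Oh(1)$-time minimizer queries together with substitutions in amortized $o(\log \ell)$ time in the comparison model.

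Next I would build a sorting routine on top of this structure. Given $n$ pairwise distinct elements $a_0,\dots,a_{n-1}$ drawn from a totally ordered alphabet that also contains a top symbol $\top$ strictly larger than all $a_i$, I construct $X=\top^n$ using $n$ \AR operations and then perform the $n$ substitutions $X[i]\leftarrow a_i$. Because only \AR and substitutions are used, no position ever shifts, so query answers are the stable indices $0,\dots,n-1$. To expose the sorted order I repeat $n$ times: query the minimizer to obtain the position $p$ of the current minimum, record $p$, and substitute $X[p]\leftarrow\top$ to retire that element. Since $\top$ exceeds every input and the inputs are distinct, the $j$-th query returns the position of the $j$-th smallest input, so the recorded sequence $p_1,\dots,p_n$ is exactly the permutation that sorts the input. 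As the input ranges over the $n!$ arrangements of a fixed set of $n$ distinct values, this procedure produces $n!$ distinct outputs, hence it is a comparison-based sorter using $n$ \AR operations, $n$ minimizer queries, and $2n$ substitutions.

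I would then account for comparisons. Any comparison sorter for $n!$ distinct outcomes needs decision-tree depth $\log_2(n!)=\Omega(n\log n)$, i.e.\ that many input-distinguishing comparisons. The $n$ setup \AR operations only ever compare copies of $\top$ against one another, so their outcomes are independent of the input arrangement and contribute no input-distinguishing branching; likewise each minimizer query costs $\Oh(1)$ comparisons by assumption, so the $n$ queries contribute only $\Oh(n)$. Consequently all but $\Oh(n)$ of the $\Omega(n\log n)$ input-dependent comparisons are performed by the $2n$ substitutions, forcing them to use $\Omega(n\log n)$ comparisons in total, i.e.\ $\Omega(\log n)=\Omega(\log \ell)$ amortized per substitution (the length stays $\ell=n$ throughout). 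This contradicts the assumed $o(\log \ell)$ amortized substitution cost and proves the claim.

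The step I expect to be the crux is the attribution argument: ensuring that the $\Omega(n\log n)$ comparisons are genuinely charged to substitutions and are not hidden inside queries or setup. This is precisely why the reduction is arranged so that the only input-dependent operations are the substitutions and the $\Oh(1)$-time queries, and why I use a pure-sentinel initialization whose comparisons provably cannot depend on the $a_i$. If one prefers not to assume $\Oh(1)$-time queries, the very same reduction still yields the weaker statement that queries and substitutions cannot both be $o(\log \ell)$ amortized, which already suffices to justify that border modifications (for which \cref{sec:DS} attains $\Oh(1)$) are fundamentally easier than substitutions.
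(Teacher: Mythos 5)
Your proof is correct and follows essentially the same route as the paper's: restrict to $k=1$ and reduce comparison-based sorting to the data structure by repeatedly querying the minimizer and overwriting that position with a maximal element, then invoke the $\Omega(n\log n)$ comparison-sorting lower bound over the $O(n)$ operations performed. Your write-up is in fact somewhat more careful than the paper's, since the sentinel initialization and the explicit attribution argument ensure the input-dependent comparisons are charged to the substitutions rather than to setup or to the $\Oh(1)$-time queries.
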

\begin{proof}
We restrict our focus to the case of $k=1$ (for larger $k$ the first letter is the most important anyway) and naive string comparison, which is equivalent to KRF for $k=1$.

Given a sequence of $\ell$ numbers, we can construct a string representing this sequence: the $i$th letter of $X$ is the $i$th number in the sequence.

With the dynamic minimizer structure for substitution operations (that is, a subset of edit operations), we can repeat $\ell$ times a procedure of finding the position of the smallest length-$1$ substring (the number) and replacing the number on that position with the maximum of the whole sequence. Such a procedure sorts the sequence of numbers using $\Oh(\ell)$ 
modifications of the sequence, hence those operations cannot cost less than $\Omega(\log \ell)$ comparisons~\cite{DBLP:books/mg/CormenLRS01}.
\end{proof}

For a constant-size alphabet $\Sigma$ and small values of $k$, one can resort to bucketing (for $k=1$ one can simply store the queue as a set of $|\Sigma|$ buckets); this approach is generally less practical, however, and does not solve additional problems such as supporting positional shifts. In the remainder of the paper, we restrict the operations on $X$ to border modifications in order to design a data structure that is \emph{both theoretically efficient and fast in practice}.

\subsection{Sliding-Window Approach}\label{sec: Sliding window}

Minimizers are most commonly used for string sampling -- a set of string \emph{anchors} is defined as the set of minimizers of all length-$\ell$ fragments of the input string $S$ for some fixed $k$. In this setting, the commonly used method to compute minimizers is the sliding-window approach - the length-$\ell$ fragment starting one position later differs by very little, and hence the update of the variables kept by the algorithms is usually cheap.
From the perspective of minimizers, usually the position of the minimizer stays the same; it can change in two cases: either the new length-$k$ fragment becomes the smallest one; or the current minimizer falls out of the window under consideration.
The standard technique exploits this setting of very restricted changes to store only information about the substrings that can become minimizers in the future.
Formally, the set of the substrings kept consists of only the pairs $(p,v)$ such that there exists no length-$k$ substring in the current window with a representing pair $(p',v')$ such that $p'>p$ and $v'<v$ (the pair is not \emph{dominated} by any other pair).
Thus, the deque (in this case a bidirectional linked-list is enough) 
storing those substrings is sorted non-decreasingly forming a Manhattan skyline (although reversed as we look for the smallest values instead of the largest ones), which allows updates in $\Oh(1)$ amortized time (more details are given in the next section, where the approach is generalised); see \cref{fig: Sliding window}.

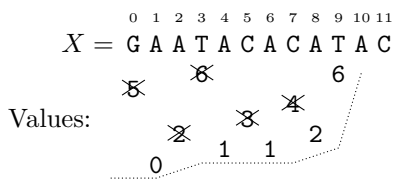
\begin{figure}[ht]
\vspace*{-.4cm}
\centering
\begin{tikzpicture} 
  
\begin{scope}[yshift=0cm, xshift=0cm]
  \foreach \ii [count=\i from 0] in {0,1,2,3,4,5,6,7,8,9,10,11} {
    \node at (\i * 0.3 + 0.3, 0.4) [above] {\tiny \ii};
  }
  \node at (-0.3, 0) [above] {$X=$};
  \foreach \l [count=\i from 0] in {G,A,A,T,A,C,A,C,A,T,A,C} {
    \node at (\i * 0.3 + 0.3, 0) [above] {\tt \texttt{\l}};
  }
  \node at (-0.8, -1) [above] {Values: };
    \foreach \hash [count=\i from 0] in {5,0,2,6,1,3,1,4,2,6} {
    \node at (\i * 0.3 + 0.3, -1.6 + \hash * 0.2) [above] {\tt \hash};
  }
  \foreach \x/\y in {0/5, 2/2, 3/6, 5/3, 7/4} {
    \draw[thin] (0.3 * \x +0.15, -1.45 + \y * 0.2) -- (0.3*\x+0.45, -1.25 + \y * 0.2);
    \draw[thin] (0.3 * \x +0.45, -1.45 + \y * 0.2) -- (0.3*\x+0.15, -1.25 + \y * 0.2); 
  }
  \draw[densely dotted] (0, -1.55) -- (0.6, -1.55) -- (1.2, -1.35) -- (1.8, -1.35) -- (2.4, -1.35) -- (3, -1.15) -- (3.3, -0.15);
  \end{scope}
\end{tikzpicture}
\caption{
Window $X$ (slid over string $S$) with the values for each length-$3$ fragment (for legibility matching their lexicographic order) starting at each position of $X$. 
Values for which there exists a smaller value to the right are crossed out -- they cannot become a minimizer, as this smaller value will leave the window later.
The remaining values form a non-decreasing sequence; notice that, in particular, the last value cannot be crossed out.
}\label{fig: Sliding window}
\end{figure}

\section{Constant-Time Update Data Structure}\label{sec:DS}

In \cref{sec:dequeue}, we show how the folklore sliding-window approach for computing minimizers (see \cref{sec: Sliding window}) can be generalized to support \emph{a subset} of the border modifications.
In \cref{sec:2Stack}, we present our data structure for the \emph{complete} set of border modifications.

We implement the semi-dynamic string $X$ using a standard bidirectional linked list.

\subsection{Generalizing the Sliding-Window Approach}\label{sec:dequeue}

We show that the sliding-window approach with storing information about only the substrings that can become minimizers in the future can be applied directly to obtain a minimizer data structure with a \emph{limited set of border modifications}: $\AL(),\AR(),\RL()$. This then becomes a building block of our data structure for all border modifications. 

In our deque data structure (or in a bidirectional linked list), we store the $(\emph{position}, \emph{value})$ pairs that represent the length-$k$ fragments that can become minimizers of $X$ after some modifications. A pair $(p,v)$ is dominated by a pair $(p',v')$ if $p'<p$ and $v'>v$.
We keep the pairs that are not dominated by any other one (from the representation of all the length-$k$ fragments of $X$) in a deque sorted increasingly by their first element (position), and because of this domination property at the same time sorted non-decreasingly by their second element (value); see \cref{fig: Sliding window}.
We call this the \emph{limited semi-dynamic minimizer} data structure.

\begin{lemma}\label{lem:dequeue}
The limited semi-dynamic minimizer data structure supports $\Oh(1)$-time minimizer queries and performs limited border modifications in $\Oh(1)$ amortized time.
\end{lemma}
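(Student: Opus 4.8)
The plan is to fix the following invariant and show that each operation preserves it in (amortized) constant time: the deque stores exactly the non-dominated pairs, ordered by increasing position, which by the domination property is the same as ordering by non-decreasing value; consequently the front of the deque is the pair of smallest value, and since ties in value are broken by the leftmost (smallest) position, this front pair is precisely the minimizer of $X$. A minimizer query then only reads the front element, which costs $\Oh(1)$; if the KRF \emph{value} is also requested, it is stored inside the pair and returned directly. So the whole content of the lemma reduces to maintaining this staircase under \AL, \AR, and \RL in amortized $\Oh(1)$ time.

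I would handle the three updates as follows. For \AR($a$) we create the single new length-$k$ fragment at the largest position, compute its value $v$ in $\Oh(1)$ from the stored neighbouring fingerprint, then repeatedly pop from the back every stored pair whose value is strictly larger than $v$ (each such pair is now dominated by the fresh one, which lies further to the right with a smaller value), and finally push the new pair to the back; this is the usual monotone-deque step. For \RL() the deleted fragment is the one at the smallest current position: if it is present in the deque it must be the front, so I compare the front position against the deleted position and pop the front only when they coincide, which is worst-case $\Oh(1)$; deleting a leftmost element can never create a new domination, so no cascade is needed. For \AL($a$) the new fragment sits at the smallest position and hence can dominate nothing to its right; it is itself non-dominated iff its value does not exceed the current front value, so I compare with the front and push to the front exactly in that case, again worst-case $\Oh(1)$.

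For correctness I would argue, operation by operation, that the stored set equals the set of non-dominated pairs of the updated window, being careful with the tie-breaking convention: when the new value equals an existing value I keep the pair whose position survives longest and is leftmost among ties, which is exactly why the back-pop in \AR uses a strict inequality while the front-push in \AL uses a non-strict one. For the running time, queries, \RL, and \AL are worst-case $\Oh(1)$, so only the back-popping loop of \AR requires amortization; I would use the potential $\Phi$ equal to the number of elements currently in the deque (equivalently, charge each pop to the push that first inserted that element). Since every operation performs at most one push, the total number of pops across any sequence of $m$ operations is at most the total number of pushes, which is $\Oh(m)$, giving amortized $\Oh(1)$ per operation. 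The main obstacle I expect is precisely this bookkeeping: verifying that the invariant is maintained exactly across the interleaving of left-end and right-end updates (in particular that \AL and \RL never force a pop on the opposite end, and that a pair pushed by \AL can still be cleanly popped by a later \AR), and confirming that the amortized bound survives this interleaving. I would also note that \RR() is deliberately excluded here: deleting from the right may discard a stored pair and ``uncover'' previously dominated fragments that were never recorded, which the plain deque cannot recover -- this is exactly the gap that the two-stack construction of \cref{sec:2Stack} is designed to close.
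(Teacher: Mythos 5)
Your proposal is correct and follows essentially the same route as the paper's proof: the same monotone-deque invariant (non-dominated pairs, non-decreasing values), the same case analysis for \AL, \AR, \RL with the same strict/non-strict tie-breaking conventions, and the same amortization of the back-popping in \AR against prior insertions (your potential-function phrasing is equivalent to the paper's charging argument). No gaps to report.
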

\begin{proof}
Let $p$ be the position of the minimizer, $v$ its value, and let $(p_1,v_1)$ be the first element of the deque (retrieved trivially in $\Oh(1)$ time); by the definition of the minimizer $v_1\ge v$.
If $p_1>p$, then the pair $(p,v)$ would appear in the deque before $(p_1,v_1)$ -- it is clearly not dominated by any pair since it has the smallest value.
If $p_1<p$ and $v_1>v$, then the pair $(p_1,v_1)$ would be dominated by the pair $(p,v)$ and therefore would not appear in the dequeue. This shows that $p_1 = p$ is the position of the minimizer and that it can be retrieved in $\Oh(1)$ time.

When updating the structure we have to consider the three possible modifications:
\begin{itemize}
\item $\AR(a)$: we insert a new pair $(p, v)$ representing the new fragment to the structure; as one with the largest position it will have to be stored in the deque (as its last element). We also need to remove all pairs stored in the deque with a value larger than $v$, which is done in time linear in the number of removals as the pairs are stored sorted.
\item $\RL()$: we remove a single fragment from the structure. If its position is equal to $p_1$, then the pair $(p_1,v_1)$ is removed -- the second element stored in the deque represents the new minimizer, as every fragment in between has strictly larger value.
\item $\AL(a)$: we only want to add the new pair $(p, v)$ if it is not dominated by any pair currently stored in the structure. To check this property, it is enough to compare value $v$ with $v_1$: if $v\le v_1$, then $(p, v)$ should be added to the structure and represent the new minimizer (other stored pairs cannot dominate this pair as their values can only be greater or equal to $v_1$); otherwise, no modification to the structure is required.  
\end{itemize}

The cost of operation $\AL()$ (resp.~$\RL()$) is $\Oh(1)$: a single comparison and single insertion (resp.~deletion) of a pair. The cost of $\AR()$ is amortized by the number of elements removed from the dequeue; namely, by the number of insertions and, in turn, the number of $\AL()$ and $\AR()$ operations performed earlier.
\end{proof}

By symmetry, the technique underlying \cref{lem:dequeue} can be adapted to support operations $\AR(), \AL()$, and $\RR()$, and with some modifications to solve the problem of a window sliding both sides over string $S$. Unfortunately, however, this technique does not extend to the full set of border modifications because of the following fact.

\begin{fact}\label{fct:Double Removals}
When both $\RL()$ and $\RR()$ operations are allowed on string $X$ every length-$k$ fragment of $X$ can become the minimizer in the future as one can perform subsequent $\RL()$ or $\RR()$ operations so that $X$ contains only this single length-$k$ fragment.
\end{fact}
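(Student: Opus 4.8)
The plan is to prove this by a direct construction: I will show that for any fixed length-$k$ fragment, there is an explicit sequence of \RL() and \RR() operations after which $X$ consists of exactly that one fragment, at which point it is trivially the minimizer. The key observation that makes this clean is that the two allowed operations act on opposite ends of $X$ and therefore never interfere with each other, so I can treat the truncation from the left and the truncation from the right independently.

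First I would fix notation: let $\ell=|X|$ and let $X[i \dd i+k-1]$ be an arbitrary length-$k$ fragment, so that $0\le i\le \ell-k$. I then specify the reduction explicitly. Applying \RR() exactly $\ell-i-k$ times deletes the letters at positions $i+k,\ldots,\ell-1$ and leaves the prefix $X[0 \dd i+k-1]$; applying \RL() exactly $i$ times afterwards deletes the letters at positions $0,\ldots,i-1$ and leaves precisely $X[i \dd i+k-1]$. The one routine check is that every intermediate string is non-empty so that each operation is well-defined: during the \RR() phase the length decreases from $\ell$ down to $i+k\ge k$, and during the \RL() phase it decreases from $i+k$ down to $k$, so the length never drops below $k\ge 1$. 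After the whole sequence, $X$ has length exactly $k$ and contains a single length-$k$ fragment, which is its minimizer for any order $\rho$. Since $i$ was arbitrary, every length-$k$ fragment can be made the minimizer.

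There is no genuine obstacle in the proof itself — the statement is an easy observation, and the only care needed is the bookkeeping of positions and the non-emptiness check above. The real content of the \emph{fact} is conceptual rather than technical: it records precisely why the domination-based pruning underlying \cref{lem:dequeue} collapses once both \RL() and \RR() are permitted. In that argument a pair $(p,v)$ dominated from the left could be discarded forever because no future operation could make it minimal; but if the right end can also be deleted, the fragment at any position may end up as the sole surviving length-$k$ fragment, so \emph{no} pair can be safely pruned and the deque would be forced to retain all $\Oh(w)$ pairs. This is exactly the difficulty that the two-stack construction of \cref{sec:2Stack} is built to circumvent.
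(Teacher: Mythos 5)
Your construction is correct and is exactly the argument the paper has in mind: the paper gives no separate proof of \cref{fct:Double Removals}, instead embedding the same one-line observation (delete letters from both ends until only the chosen fragment remains) directly in the statement. Your write-up merely makes the bookkeeping explicit --- applying \RR() exactly $\ell-i-k$ times and then \RL() exactly $i$ times, with the non-emptiness check --- and your closing remark about why this collapses the domination-based pruning of \cref{lem:dequeue} matches the paper's own discussion following the fact.
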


This shows that keeping the information about all the fragments that can become minimizers in the future brings us back to the general approach with storing information about all length-$k$ fragments; in the next section, we show how to work around this problem.

\subsection{The Semi-Dynamic Minimizer Data Structure}\label{sec:2Stack}

Our approach consists of relaxing the property of storing the $(\emph{position},\emph{value})$ pairs for all fragments that can become minimizers, to storing them only for the fragments that can become minimizers in the near future. When this ``time period'' expires (when we are no longer sure that the current minimizer is stored), we rebuild the whole structure from scratch.


Recall that $X$ is our semi-dynamic string of length $\ell$.
We distinguish its single position $x$ (preferably close to the center of the string). We partition the set of length-$k$ fragments of $X$ into those that start before position $x$ and those that start at or after position $x$. For each of those subsets, we store a separate queue structure represented by a stack.

The \emph{left stack} represents the pairs $(p,v)$ such that there exists no pair $(p',v')$ with $p'>p$ and $v'<v$ among the left subset of length-$k$ fragments -- that is exactly the same as in \cref{sec:dequeue}. This time, a stack (or a one-directional linked list) is enough with $(p_1,v_1)$ stored on top. Symmetrically, the \emph{right stack} represents the pairs $(p,v)$ such that there exists no pair $(p',v')$ with $p'<p$ and $v'\le v$ among the right subset of length-$k$ fragments.
This time, the represented subsequence is strictly increasing, as we do not need to store multiple pairs with the same value (only the one with the leftmost position is important).
Again, pairs are stored in the stack in the order of their positions - this time, the one with the rightmost position (hence again with the smallest value) is at the top. See \cref{fig:2Stack}.

Recall that for efficiency, the values are computed as KRFs.
Thus, we also store the KRF values of the leftmost and rightmost length-$k$ fragments of $X$ to allow for fast computation of the KRF of the length-$k$ fragment that is inserted or deleted during the modification of $X$.

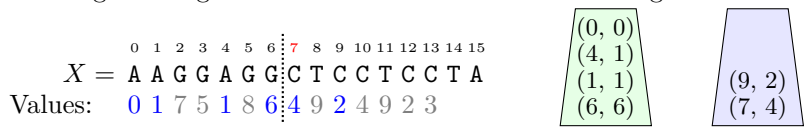
\begin{figure}[ht]
\vspace*{-.4cm}
\centering
\begin{tikzpicture} 
  
\begin{scope}[yshift=0cm, xshift=0cm]
  \foreach \ii/\c [count=\i from 0] in {0/black,1/black,2/black,3/black,4/black,5/black,6/black,7/red,8/black,9/black,10/black,11/black,12/black,13/black,14/black,15/black} {
    \node at (\i * 0.3 + 0.3, 0.4) [above, \c] {\tiny \ii};
  }
  \node at (-0.3, 0) [above] {$X=$};
  \foreach \l [count=\i from 0] in {A,A,G,G,A,G,G,C,T,C,C,T,C,C,T,A} {
    \node at (\i * 0.3 + 0.3, 0) [above] {\tt \texttt{\l}};
  }
  \node at (-0.8, -0.4) [above] {Values: };
    \foreach \hash/\c [count=\i from 0] in {$0$/blue,$1$/blue,$7$/gray,$5$/gray,$1$/blue,$8$/gray,$6$/blue,$4$/blue,$9$/gray,$2$/blue,$4$/gray,$9$/gray,$2$/gray,$3$/gray} {
    \node at (\i * 0.3 + 0.3, -0.4) [above, \c] {\tt \hash};
  }
  \draw[densely dotted,thick] (2.25,-0.4) -- (2.25, 0.8);
  \end{scope}
  \begin{scope}[yshift=0cm, xshift=6cm]
    \begin{scope}[yshift=0cm, xshift=0cm]
    \draw [fill=green!10!white] (-0.1,-0.45) -- (1.1,-0.45) -- (0.9, 1.1) -- (0.1, 1.1)-- cycle;
     \foreach \pos/\hash [count=\i from 0] in {6/6,1/1,4/1,0/0} {
        \node at (0.5, -0.5 + \i * 0.35) [above] {\small (\pos, \hash)};
        }
    \end{scope}
    \begin{scope}[yshift=0cm, xshift=2cm]
    \draw [fill=blue!10!white] (-0.1,-0.45) -- (1.1,-0.45) -- (0.9, 1.1) -- (0.1, 1.1)-- cycle;
     \foreach \pos/\hash [count=\i from 0] in {7/4,9/2} {
        \node at (0.5, -0.5 + \i * 0.35) [above] {\small (\pos, \hash)};
        }
    \end{scope}
  \end{scope}
\end{tikzpicture}
\caption{
A graphical representation of the data structure. String $X$ with the values for each length-$3$ fragment (for simplicity matching their lexicographic order) starting at each position of $X$. Position $x$ drawn in red divides $X$ into two parts -- for each one, the ``interesting'' position-values are drawn in blue; the ones in gray are not represented but they may be computed again during the full structure rebuild. One can notice that the distinguished values can repeat in the left part but not in the right one. On the right, the ``interesting'' $(\emph{position},\emph{value})$ pairs are stored in two stacks. Note that, in particular, the positions $x$ and $x-1$ are always represented (the stacks are nonempty).
}\label{fig:2Stack}
\end{figure}

We call this the \emph{two-stack semi-dynamic minimizer} data structure.
Each of the stacks stores as its top element the only candidate for the minimizer over its subset of length-$k$ fragments; hence, the minimizer can be obtained using a single comparison (or stored along the structure to support $\Oh(1)$-time queries in case of brute-force comparison).
The modifications of the left stack are performed in the same way as in \cref{sec:dequeue} without the $\AR$ operation, which is the only one whose cost was amortized (we never remove multiple pairs in a single operation). By symmetry, the right stack is handled analogously (with the simple change of handling the ties of values upon insertion to not store two pairs with the same value).

However, this time, a single exceptional case can happen: if we try to remove a pair from an empty stack, the behaviour is undefined; we would like to remove something from the other stack, but as mentioned in \cref{fct:Double Removals} this causes problems.
In such a situation, we restore the ``basic state of'' the structure by ``dividing the single stack into two stacks'' as follows:
\begin{enumerate}
    \item We recompute all the discarded pairs $(\emph{position},\emph{value})$;\footnote{We do not compute all of the pairs at once, instead we simply perform $\ell$ insertions on an empty structure. This is important for \cref{sec:SE} as such approach requires only $\Oh(1)$ extra construction space.}
    \item We set the value of $x$ to the starting position of $X$ plus $\lfloor (\ell-k)/2\rfloor$;
    \item We divide the set of length-$k$ fragments accordingly;
    \item We compute the left and right stacks.
\end{enumerate}

The above four steps can be clearly implemented in $\Oh(\ell)$ time.\footnote{This approach of rebuilding the whole structure from scratch is often used in practice to achieve amortized $\Oh(1)$-time per operation (e.g., by data structures like the \texttt{C++ std::vector} when the structure gets too large to fit in the allocated memory).}
Note that the cost of rebuilding is amortized by the number of insertions plus the number of deletions performed since the last rebuild.
Indeed, assume that after the last rebuild both stacks had size $h$, and now one stack is of size $0$, while the other one is of size $\ell$. This means that we had to perform at least $h$ operations $\RL()$ and at least $\ell-h$ operations $\AR()$.

The correctness of the approach in the standard case follows from \cref{lem:dequeue}. In the special case, we rebuild everything from scratch; hence we cannot miss any $(\emph{position},\emph{value})$ pair needed. Upon recomputing the two stacks, we fall back to the standard case: all values that may be needed before the next rebuild are stored. 
We have arrived at the main result.

\begin{theorem}\label{thm: 2Stack}
The two-stack semi-dynamic minimizer data structure supports $\Oh(1)$-time minimizer queries and performs border modifications in $\Oh(1)$ amortized time.
\end{theorem}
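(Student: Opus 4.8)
The plan is to establish three facts in turn: a minimizer query is answered correctly in $\Oh(1)$ time; every border modification that does not trigger a rebuild takes $\Oh(1)$ worst-case time; and the rebuilds contribute only $\Oh(1)$ amortised time per operation.

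For the query, I would first note that the left stack is precisely the deque of \cref{lem:dequeue}, specialised to the length-$k$ fragments that start before $x$, so its top is the minimizer of this left subset; by the mirror-image argument the top of the right stack is the minimizer of the subset of fragments starting at or after $x$. The global minimizer of $X$ is the smaller of these two candidates, compared by value and, on ties, by position. Because every position of the left subset is strictly smaller than every position of the right subset, a value tie is broken in favour of the left top, which is exactly the convention of the minimizer definition; hence a single comparison of the two tops returns the answer in $\Oh(1)$ time (and when one subset is momentarily empty, its stack is empty and the query just reads the other top).

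For the updates that do not rebuild, the point I would emphasise is that each border modification changes exactly one of the two subsets, and does so through one of the two \emph{cheap} operations of \cref{lem:dequeue}. Indeed, \AL and \RL only ever add or remove the leftmost fragment, and thus act on the left stack as the prepend and delete-front operations of \cref{lem:dequeue}; symmetrically \AR and \RR act on the right stack. No side is ever asked to perform the append operation that removes a whole run of dominated pairs --- the only operation of \cref{lem:dequeue} whose cost was amortised rather than worst-case --- because the four border operations only ever extend or shorten the two subsets at their outer ends; the sole exception, a deletion that must reach past an empty stack to the inner end of the opposite subset, triggers a rebuild rather than an incremental update. Each such update is therefore a single comparison and at most one push or pop; using the stored starting position of $X$ to locate the leftmost or rightmost fragment, this is $\Oh(1)$ in the worst case.

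The heart of the proof, and the step I expect to be the main obstacle, is the amortised analysis of the rebuild. By \cref{fct:Double Removals} a deletion from the empty side cannot be serviced incrementally, so we recompute both stacks by performing one insertion per fragment, at cost $\Oh(\ell-k+1)$. To charge this, let $N$ be the number of operations since the previous rebuild and let $\ell_0$ be the length just after it; the split then gives the left and right subsets sizes $\lfloor(\ell_0-k)/2\rfloor$ and $\lceil(\ell_0-k)/2\rceil+1$. Since a subset changes only by $+1$ on a prepend/append at its outer end and by $-1$ on the corresponding deletion, the side that triggers the next rebuild can reach size $0$ only after at least $\lfloor(\ell_0-k)/2\rfloor$ deletions from it, so $N\ge\lfloor(\ell_0-k)/2\rfloor$ and hence $\ell_0-k+1=\Oh(N)$. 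Moreover the current number of fragments is at most $(\ell_0-k+1)$ plus the number of insertions performed since the rebuild, which is at most $N$; therefore $\ell-k+1=\Oh(N)$ and the rebuild costs $\Oh(N)$. As the inter-rebuild intervals partition the operation sequence, summing these bounds charges $\Oh(1)$ to each operation, which together with the $\Oh(1)$ worst-case cost of all non-rebuild work gives the claimed amortised $\Oh(1)$ time per border modification.
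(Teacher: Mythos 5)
Your proof is correct and follows the same architecture as the paper's own: the query is answered by comparing the tops of the two stacks (each being the front of the deque of \cref{lem:dequeue} restricted to its half of $X$), every border modification is exactly one of the cheap, non-amortised operations of \cref{lem:dequeue} applied to one stack (the expensive append of that lemma is never invoked), and the only amortisation needed is for the full rebuild triggered by a deletion on an empty side. The one place where you genuinely depart from the paper is the rebuild accounting. The paper rebuilds by performing $\ell$ letter insertions on an empty structure, at cost $\Oh(\ell)$, and charges this informally to ``at least $h$ \RL() operations and at least $\ell-h$ \AR() operations'' since the last rebuild, silently identifying stack sizes with the string length. You instead rebuild with one insertion per length-$k$ fragment, at cost $\Oh(\ell-k+1)$, and charge precisely: the triggering side must have lost its initial $\lfloor(\ell_0-k)/2\rfloor$ fragments through deletions, and the current fragment count is bounded by the initial one plus the number of insertions performed since the rebuild. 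Your bookkeeping is tighter and is in fact what one needs in the regime where $k$ is comparable to $\ell$: there the number of fragments (hence of possible intervening operations before a trigger) can be far smaller than $\ell$, so a per-letter $\Oh(\ell)$ rebuild cost cannot be charged to those operations, whereas your per-fragment cost can. The only implicit assumption you make is that each fragment's value is recoverable in $\Oh(1)$ time during the rebuild; this holds by rolling the KRF from the stored value of the leftmost (or rightmost) fragment as in \cref{sec:string-comp}, so the step is sound.
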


\section{Space-Efficient Data Structure}\label{sec:SE}

In the previous section, we showed how to obtain a fast data structure by keeping only the pairs that may become important in the near future. Still, those pairs translate to keeping the representation of all the $\ell-k+1$ length-$k$ fragments of $X$ if the sequences of their values before and after position $x$ are monotonous. In this section, we show how to extend the two main ideas of our data structure -- keeping only the pairs that may become useful in the near future and re-computation -- to further reduce the space occupied by the data structure.

We implement the semi-dynamic string $X$ using a standard bidirectional linked list.

\subsection{High-Level Idea: Two-Layer Data Structure}

The high-level idea consists of partitioning our semi-dynamic string $X$ into blocks; each block is represented by a pair $(\emph{position},\emph{value})$ for the length-$k$ fragment with the smallest value among the ones starting within this block. We then treat the blocks as if those were just single length-$k$ fragments and store information about those in stacks.

Additionally, we treat the first block of $X$ as if it were the whole left part of $X$ (all the length-$k$ fragments starting before position $x$), and represent all of its length-$k$ fragments that may correspond to minimizers in the future (before the stack is emptied or the whole structure is rebuilt) just as in the previous section.
We analogously treat the last block of $X$ as if it were the whole right part of $X$ (all length-$k$ fragments starting at or after $x$).

The total number of pairs stored is bounded by the number of blocks plus the total size of a constant number of blocks.
Conceptually, we have two layers: the \emph{first layer} corresponds to the length-$k$ fragments of $X$; and the \emph{second layer} corresponds to the blocks of $X$.

\subsubsection{Fixed-Length Blocks}

We start with the simplest block division design, where each block has the same fixed length, to show the technical details of the approach in a possibly clear setting.

Let $c>1$ be a fixed integer and let $B_\kappa=\{x+\kappa\cdot c ,x+\kappa\cdot c + 1, \cdots, x - 1 + (\kappa+1)\cdot c\}$, for some integer $\kappa$. Recall that $x$ is a position on the string $X$.
This divides the length-$k$ fragments of $X$ into blocks of length $c$; i.e., length-$k$ fragments starting at positions in $B_\kappa$ form a \emph{block}.

Say that we want to build the structure for a nonempty string $X$; for example this happens when the full structure is rebuilt. 
For the first and last nonempty blocks (containing some length-$k$ fragment), we build stacks in the same way as in \cref{sec:2Stack}. Additionally, for each ``internal'' block (if any) we find the fragment with the smallest value (breaking ties for the one with the smallest position) and build two stacks for only those fragments, again just as in \cref{sec:2Stack}.
The blocks $B_\kappa$ with $\kappa<0$ are represented with the left second-layer stack, and the blocks $B_\kappa$ with $\kappa\ge0$  are represented with the right second-layer stack; see \cref{fig: 2SLayer}.
We call this the \emph{two-layer semi-dynamic minimizer} data structure. 

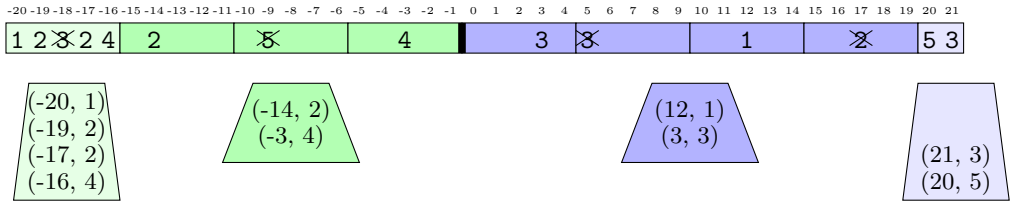
\begin{figure}[ht]
\vspace*{-.4cm}
\centering
\begin{tikzpicture} 
\begin{scope}[yshift=0cm, xshift=0cm]
  \draw [fill=green!10!white] (0,0) rectangle (1.5, 0.4);
  \draw [fill=green!30!white] (1.5,0) rectangle (3, 0.4);
  \draw [fill=green!30!white] (3,0) rectangle (4.5, 0.4);
  \draw [fill=green!30!white] (4.5,0) rectangle (6, 0.4);
  \draw [fill=blue!30!white] (6,0) rectangle (7.5, 0.4);
  \draw [fill=blue!30!white] (7.5,0) rectangle (9, 0.4);
  \draw [fill=blue!30!white] (9,0) rectangle (10.5, 0.4);
  \draw [fill=blue!30!white] (10.5,0) rectangle (12, 0.4);
  \draw [fill=blue!10!white] (12,0) rectangle (12.6, 0.4);
  
  \foreach \i in {-20,-19,-18,...,21} {
    \node at (\i * 0.3 + 6.15, 0.4) [above] {\fontsize{4pt}{4pt}\selectfont \i};
  }
  
  \foreach \pos/\hash in {-20/1,-19/2,-18/3,-17/2,-16/4,-14/2,-9/5,-3/4,3/3,5/3,12/1,17/2,20/5,21/3} {
    \node at (\pos * 0.3 + 6.15, -0.05) [above] {\tt \hash};
  }
  \foreach \x in {-18, -9, 5, 17} {
    \draw[thin] (0.3 * \x + 6, 0.1) -- (0.3*\x + 6.3, 0.3);
    \draw[thin] (0.3 * \x + 6, 0.3) -- (0.3*\x + 6.3, 0.1);
  }

  \draw [line width = 1 mm] (6,0) -- (6,0.4);
  \end{scope}
  \begin{scope}[yshift=-1.5cm, xshift=0cm]
    \begin{scope}[yshift=0cm, xshift=0.3cm]
    \draw [fill=green!10!white] (-0.2,-0.45) -- (1.2,-0.45) -- (1, 1.1) -- (0, 1.1)-- cycle;
     \foreach \pos/\hash [count=\i from 0] in {-16/4,-17/2,-19/2,-20/1} {
        \node at (0.5, -0.5 + \i * 0.35) [above] {\small (\pos, \hash)};
        }
    \end{scope}
    \begin{scope}[yshift=0cm, xshift=3.25cm]
    \draw [fill=green!30!white] (-0.4,0.05) -- (1.4,0.05) -- (1, 1.1) -- (0, 1.1)-- cycle;
     \foreach \pos/\hash [count=\i from 0] in {-3/4,-14/2} {
        \node at (0.5, 0.1 + \i * 0.35) [above] {\small (\pos, \hash)};
        }
    \end{scope}
    \begin{scope}[yshift=0cm, xshift=8.5cm]
    \draw [fill=blue!30!white] (-0.4,0.05) -- (1.4,0.05) -- (1, 1.1) -- (0, 1.1)-- cycle;
     \foreach \pos/\hash [count=\i from 0] in {3/3,12/1} {
        \node at (0.5, 0.1 + \i * 0.35) [above] {\small (\pos, \hash)};
        }
    \end{scope}
    \begin{scope}[yshift=0cm, xshift=12cm]
    \draw [fill=blue!10!white] (-0.2,-0.45) -- (1.2,-0.45) -- (1, 1.1) -- (0, 1.1)-- cycle;
     \foreach \pos/\hash [count=\i from 0] in {20/5,21/3} {
        \node at (0.5, -0.5 + \i * 0.35) [above] {\small (\pos, \hash)};
        }
    \end{scope}
  \end{scope}

\end{tikzpicture}
\caption{
A graphical representation of the two-layer data structure for $c=5, x=0$.
The ``internal'' blocks (with darker colour on the figure) are represented by one fragment each (the one with the smallest value).
We store one second-layer stack representing the internal blocks on each side, and one first-layer stack for each border block.
Note that in the process of updating the structure, there are either one or two such blocks for each side of the string.
}\label{fig: 2SLayer}
\end{figure}

\begin{lemma}\label{lem: 2 Layer Fixed Length}
Suppose that we have read-only access to a semi-dynamic string $X$ of length $\ell$. The two-layer semi-dynamic minimizer data structure with blocks of length $c$, for any integer $c>1$, supports $\Oh(1)$-time minimizer queries, performs border modifications in $\Oh(1)$ amortized time, and occupies $\Oh(c+\frac{\ell}{c})$ space.
\end{lemma}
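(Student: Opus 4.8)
The plan is to establish the three claims separately, analysing the right end of $X$ in detail and appealing to symmetry for the left. First I would bound the space. Since each block spans $c$ starting positions, $X$ meets $\Oh(\frac{\ell}{c})$ nonempty blocks, so there are $\Oh(\frac{\ell}{c})$ internal blocks, each contributing a single representative pair to a second-layer stack; as the two second-layer stacks hold at most one entry per internal block, they occupy $\Oh(\frac{\ell}{c})$ space. The remaining stored pairs sit in the first-layer stacks of the border blocks, of which there are $\Oh(1)$ (at most two per side, as explained below), each holding at most $c$ pairs, for $\Oh(c)$ more. Together with the $\Oh(1)$ words for $x$ and the two boundary KRFs, and recalling that $X$ is not charged because access to it is read-only, this gives $\Oh(c+\frac{\ell}{c})$. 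For the query, I would note that the overall minimizer is the smallest-value length-$k$ fragment over all blocks, ties broken by smallest position, and that by the invariants inherited from \cref{sec:2Stack} the top of each border block's first-layer stack is that block's (leftmost) minimizer while the top of each second-layer stack is the leftmost smallest representative over the internal blocks on its side. As there are $\Oh(1)$ such tops, comparing them -- preferring the smaller position on a value tie, which is consistent because left-part positions precede right-part positions -- returns the minimizer in $\Oh(1)$ time (or one simply maintains the answer across updates).

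The substance is the amortized update bound; I would treat $\AR(a)$ and $\RR()$ and invoke symmetry for $\AL(a),\RL()$. If the rightmost position stays inside the current last block, the relevant first-layer stack is updated exactly as the right stack of \cref{sec:2Stack} -- a prefix-minima stack supporting push and pop at its right end in $\Oh(1)$ worst-case time. The costly events are therefore only (i) a block-boundary crossing and (ii) the global rebuild that, as in \cref{sec:2Stack}, is triggered when a side is exhausted. A crossing to the right \emph{demotes} a border block: its representative is read off the top of its first-layer stack and pushed, in $\Oh(1)$, onto the second-layer stack (itself a prefix-minima stack), after which the obsolete first-layer stack is discarded at a cost charged to the updates that built it. A crossing to the left may instead force \emph{promoting} a previously internal block back to a border block, which entails rebuilding its first-layer stack from the at most $c$ fragments it holds -- an $\Oh(c)$ operation, possibly preceded by an $\Oh(1)$ pop of its now-obsolete representative from the second-layer stack.

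The crux, and the point of keeping \emph{one or two} border blocks per side, is to amortize these $\Oh(c)$ promotions. I would maintain the rule that the right border blocks are the last one or two nonempty blocks: a block is demoted to its representative only once two nonempty blocks lie to its right, and a block is (re)built into a border block only at the moment it becomes the last nonempty block while lacking a first-layer stack. With this hysteresis a block $B$ that is re-promoted must first have been demoted -- which requires the right end to have advanced at least a full block past $B$ -- and then have the right end descend all the way back into $B$; hence consecutive promotions of $B$ are separated by $\Omega(c)$ updates. I would make this precise with a potential equal to a constant times the number of positions the right end has already retreated into the \emph{current sole} right-border block (and its mirror $\Phi_L$ for the left): each $\RR$ descending into that block raises the potential by $\Oh(1)$, and the $\RR$ that finally empties it and triggers the $\Oh(c)$ promotion simultaneously releases $\Omega(c)$ potential, so its amortized cost is $\Oh(1)$, while all other updates change the potential by $\Oh(1)$.

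Finally I would observe that the global rebuilds are amortized exactly as in the proof of \cref{thm: 2Stack} -- a rebuild costs $\Oh(\ell)$ but only follows $\Omega(\ell)$ updates since the previous one, and the $\Oh(\ell)$ worth of block (re)builds performed while exhausting a side, being $\Oh(\frac{\ell}{c})$ blocks at $\Oh(c)$ each, is absorbed into the same budget. I expect the main obstacle to be exactly the promotion accounting above: fixing a demotion/promotion rule that provably keeps at most two border blocks per side while guaranteeing the $\Omega(c)$-update separation, together with the matching potential. Once that is in place, correctness and the remaining bounds follow the template of \cref{sec:2Stack} and the structure depicted in \cref{fig: 2SLayer}.
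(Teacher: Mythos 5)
Your proposal follows essentially the same approach as the paper's proof: the same block structure with at most two first-layer stacks per side, the same hysteresis rule (demote the most internal border stack when a third one appears; rebuild a block's stack only when deletions reach a block that lacks one), the same $\Oh(1)$ query by comparing the $\Oh(1)$ stack tops, and the same space accounting. The only stylistic difference is that the paper amortizes the $\Oh(c)$ promotions/demotions by a direct charging argument -- after every such expensive operation a (nearly) full first-layer stack is in memory, so at least $c$ border modifications on that side must separate consecutive expensive operations -- whereas you package the same fact into a potential function; these are interchangeable.

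There is, however, one corner case your potential does not cover, and which the paper handles explicitly. Right after a full rebuild, the outermost block on each side need not be full: the string's endpoints do not align with block boundaries, so the freshly built border stack may contain as few as one fragment. Consequently, the first promotion after a rebuild can cost $\Theta(c)$ after only $\Oh(1)$ deletions, and under your reading of the potential (positions the end has retreated into the current sole border block since it became the border block) there is no stored potential to pay for it; under the alternative reading (block positions not currently present) the rebuild itself must inject up to $\Theta(c)$ of potential, which requires justifying that $c=\Oh(\ell)$ at rebuild time. Your closing remark about absorbing the block rebuilds ``performed while exhausting a side'' into the global-rebuild budget does not quite repair this, because that first promotion need not be followed by another full rebuild (after it, the string may grow indefinitely), so the charge must go \emph{backwards} to the preceding rebuild. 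This is exactly what the paper does: it charges the first re-computation or removal after a full rebuild to that rebuild, observing that the charge is covered even if $\ell<c$ (a promotion presupposes an internal, hence full, block, so $c=\Oh(\ell)$ whenever the charge is nonzero). With that one clause added, your argument matches the paper's proof of \cref{lem: 2 Layer Fixed Length}.
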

\begin{proof}
The minimizer is stored as the top element of one of the stacks.
If it comes from a ``border'' block, then the correctness follows from \cref{thm: 2Stack}.
Similarly, if the minimizer belongs to one of the ``internal'' blocks, then it must for sure be its representing element, thus again it is stored as the top element of the second layer stack by the proof of \cref{thm: 2Stack}.

When the structure is updated, and adding a new fragment does not require starting a new block (respectively, we do not attempt to remove an element from an empty first-layer stack), then the operation clearly takes $\Oh(1)$ time like in the previous section.

If a new block gets started upon the insertion of a letter, we simply create an empty stack for the new block and insert the pair representing the new fragment there. This means that we can have more than one first-layer stacks for this side of $X$. For space reasons if there are three such stacks for a single side of $X$, we remove the most internal one; i.e., insert its top element into the second-layer stack (if it satisfies the conditions of not being dominated) and simply discard the rest of the stack -- this takes $\Oh(c)$ time. Symmetrically, when we want to remove an element from an empty first-layer stack, we simply use the first-layer stack for the next block, and if such a stack is not in the memory, we recompute it from scratch in $\Oh(c)$ time (plus possibly remove the top element of the second-layer stack).

Like in the vanilla version of the two-stack semi-dynamic minimizer structure, if we pass the position $x$, the whole structure gets recomputed from scratch.

The complexity analysis is similar to the one of \cref{sec:2Stack}. The only costly operations are rebuilds.
In the case of a full rebuild, the costs are covered by the number of insertion and deletion operations since the last full rebuild was performed. 

The re-computation of a first-layer stack takes $\Oh(c)$ time, yet, for such an operation to happen at least $c$ letters had to be deleted from this side of $X$ since the previous re-computation or stack removal.
Indeed, whether the previous such operation was a re-computation or a removal one full stack was in the memory afterwards. 
Analogously, in case of removal of a first-layer stack, one had to insert at least $c$ letters on this side of $X$.
In case of the first re-computation or removal after the full rebuild, we charge its cost to this full rebuild (the cost is correctly covered even if $\ell<c$).
In total, this shows that the amortized cost of operations is $\Oh(1)$.

The space usage follows from the fact that there are at most four first-layer stacks at a time, each of size $\Oh(c)$, plus two second-layer stacks, each of size $\Oh(\ell/c)$.
\end{proof}

\subsubsection{\texorpdfstring{$\Oh(\sqrt{\ell})$}{O(√l)}-Space Data Structure}\label{sec: 2layer sqrt(n)}

Note that previously, we never actually used the fact that the blocks are of the same length; the complexity analysis relied only on the property that the more ``internal'' block (for which the stack is removed or recomputed) is up to a constant factor larger (or actually smaller) than the block that is further outside (whose completion or emptying initializes the operation).

Let $B'_\kappa = \{x+\kappa^2+1,x+\kappa^2+2,\cdots, x+(\kappa+1)^2\}$, for $\kappa\ge 0$, and $B'_\kappa=\{x - \kappa^2, x -\kappa^2+1,\cdots, x - (\kappa+1)^2 - 1\}$, for $\kappa<0$. The blocks on each side have lengths $1,3,5,7,\cdots$ from the middle (from position $x$) outwards.
In this way, for $\ell=|X|$ we have at most $\sqrt{\ell}+1$ blocks on each side ($1+3+5+ \cdots +2\kappa+1 = (\kappa+1)^2$, and $\kappa^2<\ell$) of $x$. The size of the outer blocks is also bounded by $\Oh(\sqrt{\ell})$. 
We call this partitioning of $X$ \emph{progressing blocks}, and
from \cref{lem: 2 Layer Fixed Length}, we obtain the following result.

\begin{theorem}
Suppose that we have read-only access to a semi-dynamic string $X$ of length $\ell$. The two-layer semi-dynamic minimizer data structure with progressing blocks supports $\Oh(1)$-time minimizer queries, performs border modifications in $\Oh(1)$ amortized time, and occupies $\Oh(\sqrt{\ell})$ space.\label{the:2layers}
\end{theorem}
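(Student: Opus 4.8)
The plan is to invoke \cref{lem: 2 Layer Fixed Length} almost verbatim, after observing that its proof uses the common block length $c$ in exactly one place—bounding the cost of recomputing or removing a first-layer stack—and that this bound only needs a relationship between the sizes of \emph{neighbouring} blocks, not their equality. So the whole argument reduces to (i) two combinatorial facts about progressing blocks that give the space bound, and (ii) checking that the amortised accounting of \cref{lem: 2 Layer Fixed Length} survives unequal block lengths.

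First I would record the combinatorial facts. Since $\sum_{i=0}^{\kappa}(2i+1)=(\kappa+1)^2$, the fragments starting at $x,x+1,\dots,x+(\kappa+1)^2-1$ are covered by the $\kappa+1$ blocks $B'_0,\dots,B'_\kappa$; hence covering the $\Oh(\ell)$ fragments on one side of $x$ uses at most $\sqrt{\ell}+1$ blocks, and the outermost nonempty block $B'_\kappa$ has length $2\kappa+1=\Oh(\sqrt{\ell})$ (symmetrically for $\kappa<0$). The space bound is then immediate: the two second-layer stacks store one pair per block, i.e.\ $\Oh(\sqrt{\ell})$ pairs in total, while at any moment there are at most four first-layer stacks (at most two per side), each representing a border block of length $\Oh(\sqrt{\ell})$. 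The block boundaries are the values $x\pm\kappa^2$, which can be maintained incrementally in $\Oh(1)$ time per border modification, so they require no extra storage; and because rebuilds and stack recomputations read $X$ and perform insertions on an initially empty structure (cf.\ the footnote to \cref{sec:2Stack}) rather than materialising a scratch copy of $X$, the total working space stays $\Oh(\sqrt{\ell})$.

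Correctness of the $\Oh(1)$-time minimizer query is inherited unchanged from \cref{lem: 2 Layer Fixed Length}, since the proof that the answer is the smaller of the two stack tops never refers to how $X$ is partitioned. It therefore remains only to re-examine the amortised update cost, and here I would reuse the accounting of \cref{lem: 2 Layer Fixed Length}, replacing the single claim ``$\Oh(c)$ per recompute/remove, paid for by $\ge c$ insertions or deletions'' by the block-sensitive statement: recomputing or removing the first-layer stack of a block $B$ costs $\Oh(|B|)$, and this operation is triggered by the exhaustion (for a recompute) or the completion (for a removal) of the adjacent block lying one step further out, which, since the previous recompute/remove left that block as a full border stack in memory, required $\Omega(|B'|)$ deletions or insertions on that side in the intervening interval. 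The charging closes provided $|B|=\Oh(|B'|)$, i.e.\ that neighbouring blocks differ in length by at most a constant factor, and for progressing blocks this is immediate and in fact favourable: consecutive sizes are $2\kappa+1$ and $2\kappa+3$, so the inner block is strictly smaller than its outer neighbour, with ratio at most $3$. Full rebuilds are charged exactly as before, to the $\Omega(\ell)$ insertions and deletions since the previous rebuild, and the first recompute/remove after a rebuild, of cost $\Oh(\sqrt{\ell})$, is absorbed into that rebuild's $\Oh(\ell)$ budget.

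The main obstacle is precisely this amortised analysis: one must confirm that the \emph{only} quantitative dependence on the block design in \cref{lem: 2 Layer Fixed Length} is the neighbouring-block size ratio, and then verify that for progressing blocks the block on which a recompute/remove acts is never more than a constant factor larger than the neighbouring block whose completion or exhaustion initiated it. Some care is needed at the boundaries—the outermost block immediately after a rebuild, and the point that a block, once recomputed or built up into the new border, is genuinely full, so that emptying it really costs $\Omega(|B|)$ deletions—but these steps mirror the corresponding ones in \cref{lem: 2 Layer Fixed Length} and introduce no new idea.
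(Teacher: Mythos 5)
Your proposal is correct and follows essentially the same route as the paper: the paper likewise observes that the analysis of \cref{lem: 2 Layer Fixed Length} depends only on the more internal block being at most a constant factor larger than its outer neighbour, and then combines this with the same two combinatorial facts about progressing blocks (at most $\sqrt{\ell}+1$ blocks per side via $1+3+\cdots+(2\kappa+1)=(\kappa+1)^2$, and border blocks of size $\Oh(\sqrt{\ell})$). Your write-up is in fact more explicit than the paper's about the charging argument and the bookkeeping of block boundaries, but it introduces no different idea.
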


\cref{the:2layers} (even if it is more powerful than what is needed) has the following implication.

\begin{corollary}
Suppose that we have read-only access to a string $S$ of length $n$ over an alphabet $\Sigma$, two integers $w\ge 2$ and $k\ge 1$, and an order $\rho=(\Sigma^k,\leq)$. The set $\Minimizers(S)$ can be computed in $\cO(n)$ time using $\cO(\sqrt{w})$ working space. 
\end{corollary}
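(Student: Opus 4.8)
\subparagraph{Proof proposal for the Corollary.}
The plan is to instantiate the data structure of \cref{the:2layers} with the semi-dynamic string $X$ taken to be the current length-$(w+k-1)$ window of $S$, and to realize the left-to-right sweep over all windows as a stream of border modifications. Concretely, I would keep two indices marking the current window inside $S$. Since every window $S[i \dd i+w+k-2]$ is a contiguous fragment of $S$, the read-only access to $X$ that \cref{the:2layers} presupposes is provided for free by the read-only access to $S$ together with these two indices, each taking $\Oh(1)$ space and supporting $\Oh(1)$-time character lookups. Thus $X$ is never materialized as a linked list, and the only space charged on top of $S$ is that of the data structure itself.

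First I would initialize the structure on the first window $X = S[0 \dd w+k-2]$ by a single build costing $\Oh(w+k)=\Oh(n)$ time. Then, for $i = 0, 1, \ldots, n-w-k+1$, I would (i) answer a minimizer query in $\Oh(1)$ time, translate the returned relative position back to its absolute position in $S$ in $\Oh(1)$ time using the stored start-of-window offset (as in \cref{sec:string-comp} and the convention of Section~2.2), and record it; and (ii) advance to the next window by one \AR (appending $S[i+w+k-1]$ at the right end) followed by one \RL (deleting $S[i]$ at the left end). Each window corresponds to exactly one query and $\Oh(1)$ border modifications, and there are $\Oh(n)$ of each; by \cref{the:2layers} the queries are $\Oh(1)$ worst case and the modifications are $\Oh(1)$ amortized, so the whole sweep runs in $\Oh(n)$ time. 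The KRF of each newly created length-$k$ fragment is obtained in $\Oh(1)$ time by rolling from the stored KRF of the adjacent rightmost (resp.\ leftmost) fragment exactly as in \cref{sec:string-comp}, so values never incur an $\Oh(k)$ penalty on the normal sliding path, and the $\Oh(k)$ hashing hidden inside a rebuild is subsumed by the $\Oh(\ell)$ rebuild cost that \cref{the:2layers} already amortizes away.

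The step I expect to require the most care is the space bound, since \cref{the:2layers} is stated as $\Oh(\sqrt{\ell})$ in the length $\ell = |X| = w+k-1$, whereas the corollary claims $\Oh(\sqrt{w})$. The key observation is that the second layer partitions the length-$k$ \emph{fragments} of $X$, not its positions, and $X$ has exactly $w$ such fragments; the governing quantity in the analysis of \cref{lem: 2 Layer Fixed Length} and \cref{the:2layers} is therefore the fragment count $w$ rather than $\ell$. I would make this explicit by replacing the crude estimate $\kappa^2<\ell$ with $(\kappa+1)^2 \geq w$ when counting progressing blocks: the blocks reach the ends of $X$ after $\Oh(\sqrt{w})$ blocks per side, the outermost block has size $\Oh(\sqrt{w})$, and the two first-layer stacks hold $\Oh(\sqrt{w})$ pairs, so the structure occupies $\Oh(\sqrt{w})$ space with no change to any time bound. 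Combining the $\Oh(n)$ running time with this $\Oh(\sqrt{w})$ working space on top of $S$ yields the claim.
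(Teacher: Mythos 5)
Your proposal is correct and follows essentially the same route as the paper, which states the corollary as an immediate consequence of \cref{the:2layers}: slide the length-$(w+k-1)$ window over the read-only string $S$ via \AR/\RL border modifications, answering one $\Oh(1)$-time minimizer query per window, for $\Oh(n)$ total time. Your explicit observation that the blocks partition the $w$ length-$k$ fragments of $X$ rather than its $\ell$ positions, so that the space is genuinely $\Oh(\sqrt{w})$ and not merely $\Oh(\sqrt{\ell})=\Oh(\sqrt{w+k})$, is precisely the refinement needed to justify the stated bound, which the paper leaves implicit.
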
 

\subsection{Multi-Layer Data Structure}

The above design naturally generalizes to multiple layers; blocks can be aggregated into second-layer blocks, just as fragments were grouped into blocks.
In this way for a three-layer semi-dynamic minimizer data structure, the space occupied becomes $\Oh(\sqrt[3]{\ell})$.
When using $h$ layers, the occupied space becomes $\Oh(h\cdot \ell^{1/h})$. At the same time, the cost of operations actually grows - there are at most $4h - 2$ stacks at the same time, hence the cost of finding the minimizers grows to $\Oh(h)$ (although it can be stored along the structure and have its cost covered by the modifications). The cost of border modifications also increases to amortized $\Oh(h)$; a single modification pays for the amortization of every layer.

In the limit, each block of level $h$ consists of a constant number of blocks of length $h-1$, and new layers are introduced when needed, that is, the blocks form a binary balanced tree over the length-$k$ fragments of $X$. 
The space occupied by the data structure becomes $\Oh(\log \ell)$, while operations require $\Oh(\log \ell)$ amortized time.

\section{Application and Experimental Evaluation}\label{sec:exp}

\subsection{Application: Minimizers on a Trie}

A \emph{weighted string} $W$ of length $n$ is a sequence of $n$ probability distributions over an alphabet $\Sigma$~\cite{DBLP:conf/cpm/BartonKPR16,DBLP:journals/iandc/BartonK0PR20,DBLP:journals/jea/Charalampopoulos20}. Such sequences are used to represent uncertain data as well as a succinct model for many similar but not equal strings (e.g., DNA sequences representing a pangenome)~\cite{DBLP:conf/icde/Gabory0LPZ24}.

In such a compact representation, one is interested in finding the positions where a given pattern $P$ occurs with a sufficiently high probability. Formally, given a \emph{weight threshold} $1/z$, we say that string $P$ occurs in $W$ at position $i$ if the product of probabilities of the letters of $P$ at positions $i,\ldots, i+|P|-1$ in $W$ is at least $1/z$~\cite{DBLP:conf/cpm/AmirCIKZ06}. Although indexing methods for standard strings are space-efficient, for weighted ones, the state-of-the-art indexes occupy $\Oh(nz)$ space~\cite{DBLP:conf/cpm/BartonKPR16,DBLP:journals/iandc/BartonK0PR20,DBLP:journals/jea/Charalampopoulos20}, making them impractical~\cite{DBLP:conf/icde/Gabory0LPZ24}.
A space-efficient minimizer-based index for $W$ and patterns of length $|P|\ge \ell$ was shown in~\cite{DBLP:conf/icde/Gabory0LPZ24} with two construction methods.
The first method is based on computing the minimizers in length-$\ell$ windows of a so-called $z$-estimation of $W$ composed of $\lfloor z\rfloor$ standard length-$n$ strings~\cite{DBLP:journals/iandc/BartonK0PR20}.
The second is based on computing minimizers in a trie of size $\Oh(nz)$ representing $W$~\cite{DBLP:conf/cpm/BartonKPR16}: compute the minimizers of every length-$\ell$  path starting in a node and going towards the root; see \cref{fig: Tree}. 

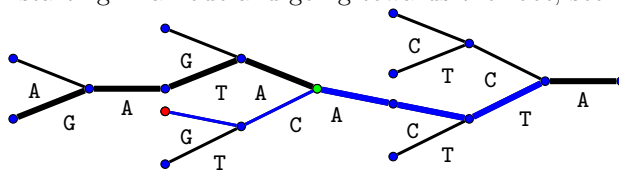
\begin{figure}[ht]
\vspace*{-.4cm}
\centering
\begin{tikzpicture}

\tikzstyle{dot}=[inner sep=0.04cm, circle, draw]

\node[dot,fill=blue] (t) at (0,0) {};
\foreach \name/\dy/\parent/\l/\thick in {
  t1/0/t/\texttt{A}/2,
  t2/0.5/t1/\texttt{T}/2,
  t3/-0.2/t2/\texttt{C}/2,
  t4/-0.2/t3/\texttt{A}/2,
  t5/-0.4/t4/\texttt{A}/2,
  t6/0.4/t5/\texttt{T}/2,
  t7/0/t6/\texttt{A}/2,
  t8/0.4/t7/\texttt{G}/2,
  a1/-0.5/t1/\texttt{C}/1,
  a2/-0.4/a1/\texttt{C}/1,
  a3/0.4/a1/\texttt{T}/1,
  b1/0.5/t2/\texttt{T}/1,
  c1/0.5/t4/\texttt{C}/1,
  c2/-0.2/c1/\texttt{G}/1,
  c3/0.5/c1/\texttt{T}/1,
  d1/-0.4/t5/\texttt{G}/1,
  e1/-0.4/t7/\texttt{A}/1
} {
  \node[dot,fill=blue] (\name) at ($(\parent)-(1,\dy)$) {};
  \draw[black,line width = \thick *  0.4 mm] (\parent)--(\name) node[midway,auto] {\small \l};
}
\foreach \name/\parent/\thick in {c2/c1/1,c1/t4/1,t4/t3/2,t3/t2/2,t2/t1/2} {
\draw[blue,line width = \thick *  0.4 mm] (\parent)--(\name);
}
 \node[dot,fill=red] at (c2) {};
 \node[dot,fill=green] at (t4) {};

\end{tikzpicture}
\caption{
Illustration of the problem of computing minimizers for all the length-$\ell$ paths of a trie (consisting of a \emph{heavy path} and small subtrees hanging out), starting from nodes and going towards the root, exactly as considered in~\cite{DBLP:conf/icde/Gabory0LPZ24}. For $(\ell=5, k=2)$, the smallest (in lexicographic order) length-$2$ fragment of $\texttt{GCACT}$ (the length-$5$ string spelled on the blue path starting at the red node) is $\texttt{AC}$, hence the green node is reported as part of the output of the algorithm from~\cite{DBLP:conf/icde/Gabory0LPZ24}.
}\label{fig: Tree}
\end{figure}

Although the trie itself has size $\Oh(nz)$ and can be constructed in $\Oh(nz)$ time~\cite{DBLP:conf/cpm/BartonKPR16} (just as the $z$-estimation~\cite{DBLP:journals/iandc/BartonK0PR20}), a key advantage of it, compared to the $z$-estimation, is that \emph{it does not need to be constructed explicitly}~\cite{DBLP:conf/icde/Gabory0LPZ24}. 
The trie approach still poses natural limitations; while the total size of the trie is $\Oh(nz)$, it can actually have $\Theta(nz)$ leaves. 
Hence, using the standard sliding-window approach for those leaves \emph{only} would take $\Oh(nz\ell)$ time; and applied directly to every string represented by the trie would take $\Oh(n^2 z)$ time. Thus the algorithm presented in~\cite{DBLP:conf/icde/Gabory0LPZ24} simulates the trie in DFS order, computes the minimizers using a structure like the one from \cref{sec:general-queue}, and thus \emph{requires only $\Oh(n)$ construction space} (because $n$ is the maximal string depth in the trie) and $\Oh(nz\log \ell)$ time. 
Using our \emph{semi-dynamic minimizer} data structure this improves to $\Oh(n)$ space and $\Oh(nz)$ time.
Note that alternatively, one could use a data structure for answering range minimum queries on a tree (cf.~\cite{DBLP:journals/algorithmica/DemaineLW14}); however, this translates to preprocessing the whole trie, which would take $\Omega(nz)$ space.

In the following section, we evaluate the practical runtime of the methods described in the paper to compute minimizers in the above two settings: (i) computing minimizers in a collection of standard strings ($z$-estimation); and (ii) computing minimizers on a trie.

\subsection{Experimental Evaluation}

While the multi-layer data structure from \cref{sec:SE} has nice theoretical properties, still, in practice, the size of the one-layer structure is already sublinear: assuming that values of fragments are drawn at random from a large enough universe the number of non-dominated pairs is $\Theta(\log \ell)$ on average.
At the same time, the vanilla version of the structure (\cref{sec:2Stack}) is much faster than the multi-layer one, and we have to store the (read-only) string $S$ anyway, hence, we restrict our experimental evaluation to this vanilla version.

\subparagraph{Datasets.} We used two real weighted strings which model variations found in the DNA ($|\Sigma| = 4$) of different
samples of the same species. Alternative DNA sequences at a locus are called \emph{alleles}. Alleles have a natural representation as weighted strings: we model the probability $p_i(\alpha)$ in these strings as the relative frequency of the letter $\alpha\in\Sigma$ at the position $i$ among the different samples. We next describe the datasets we used (see also~\cite{DBLP:conf/icde/Gabory0LPZ24}):
\begin{itemize}
    \item \textsf{EFM}: The complete chromosome of the Enterococcus faecium Aus0004 strain (CP003351)\footnote{\url{https://www.ncbi.nlm.nih.gov/nuccore/CP003351}} of length $n=2,955,294$ combined with a set of SNPs\footnote{\url{https://github.com/francesccoll/powerbacgwas/blob/main/data/efm_clade_all.vcf.gz}} taken from $1,432$ samples~\cite{EFM}.
\item \textsf{HUMAN}: The complete chromosome 22 of the Homo sapiens genome (v.~GRCh37)\footnote{\url{https://www.ncbi.nlm.nih.gov/datasets/genome/GCF_000001405.13/}} of length $n=35,194,566$ combined with a set of SNPs\footnote{\url{https://ftp.1000genomes.ebi.ac.uk/vol1/ftp/release/20130502/ALL.chr22.phase3_shapeit2_mvncall_integrated_v5b.20130502.genotypes.vcf.gz}} taken from the final phase of the 1000 Genomes Project (phase 3) representing $2,504$ samples~\cite{1000genomes}.
\end{itemize}
For \textsf{EFM}, we used $z=32$ and for \textsf{HUMAN}, we used $z=8$. We have also used $\ell=\{2^4,\ldots,2^{11}\}$ and then $k=4\log\ell / \log|\Sigma|$ (as typically used~\cite{DBLP:journals/tkde/LoukidesPS23}). 

\subparagraph{Methods.} We have written and considered the following \texttt{C++} implementations:

\begin{itemize}
    \item Heap (\textsf{HP}) approach: Our implementation of the algorithm from \cref{sec:general-queue}.
    \item Sliding-Window (\textsf{SW}) approach: Our implementation of the algorithm from \cref{sec: Sliding window}.
    \item Two-Stack (\textsf{TS}) approach: Our implementation of the algorithm from \cref{sec:2Stack}.
\end{itemize}    
No low-level code optimization has been performed for any of the implementations.
The source code can be freely accessed at \url{https://github.com/wiktor10z/dynamic_minimizers}. 

\subparagraph{Measures.} To measure time, we used the \texttt{std::chrono C++} library. Each experiment was performed three times and the average was recorded (no particular variance was observed).

\subparagraph{Environment.} All experiments were run using a single AMD EPYC 7282 CPU at 2.8GHz with 252GB RAM under GNU/Linux. All methods were implemented in \texttt{C++} and compiled with \texttt{g++} (v.~12.2.1) at optimization level \texttt{-O3}.

\subparagraph{Results.} \cref{fig:EFM} depicts the results for \textsf{EFM} in the two settings discussed. As expected, \textsf{SW} on a trie was not competitive, so its results are omitted. We observe the following:
\begin{itemize}
    \item The runtime of \textsf{SW} and \textsf{TS} (resp.~\textsf{HP}) matches their asymptotic $\Oh(1)$-time (resp.~$\Oh(\log \ell)$-time) bound.
    \item The $\Oh(1)$-time update algorithms (\textsf{SW} and \textsf{TS}) work much faster than the $\Oh(\log \ell)$-time algorithm (\textsf{HP}) even for small values of $\ell$. 
    \item In its worst-case scenario (that is, when the window is moving only one way, the rebuilds occur every $\ell/2$ moves), \textsf{TS} works slower than the sliding-window structure tailored for this application only by a factor smaller than $3.5$.
    \item In the trie application, as $\ell$ grows, 
    the rebuilds of the two-stack data structure get less frequent and \textsf{TS} thus gets (up to a point) faster (while \textsf{HP} becomes slower as expected).
\end{itemize} 
\cref{fig:HUMAN} depicts the results for \textsf{HUMAN} that are analogous. In \cref{app:SARS}, we present results for yet another dataset (SARS-CoV-2) that are also analogous.

\begin{figure}[t]
\begin{subfigure}{0.49\textwidth}
\includegraphics[width=1\linewidth]{./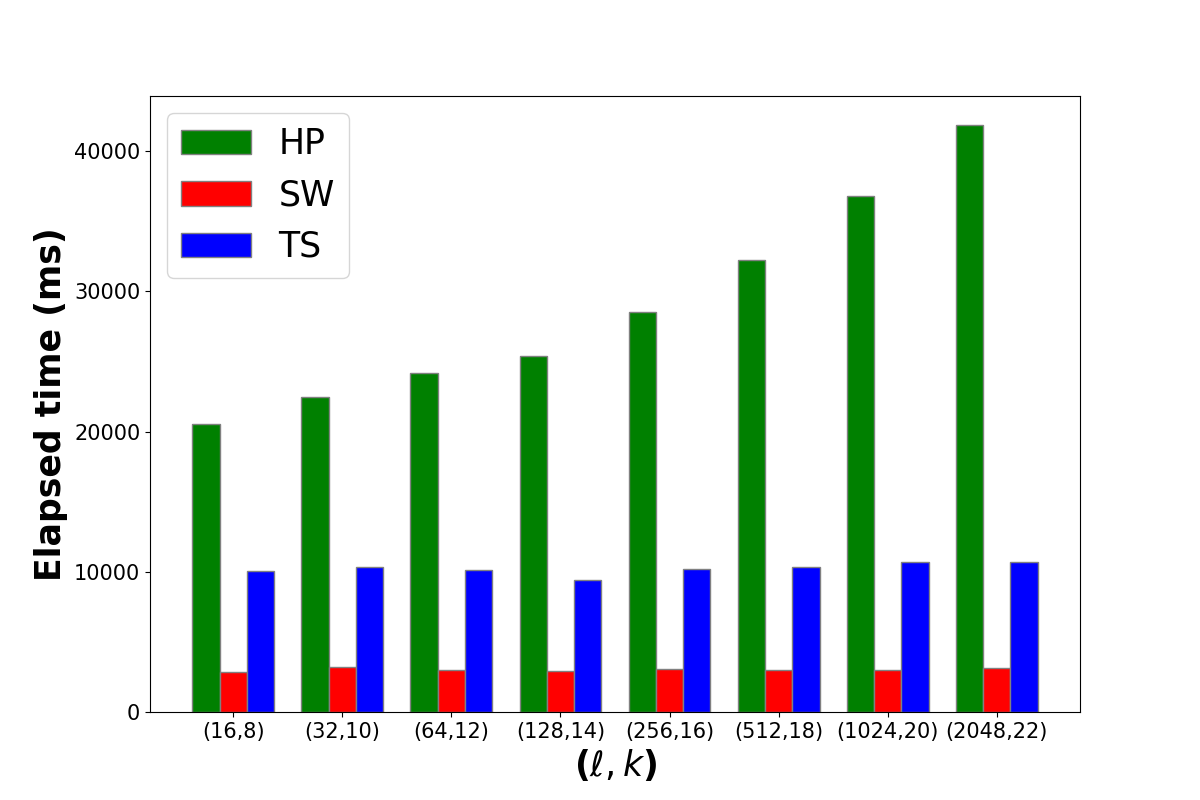} 
\caption{Results on the $z$-estimation.}
\label{fig:subim1}
\end{subfigure}
\begin{subfigure}{0.49\textwidth}
\includegraphics[width=1\linewidth]{./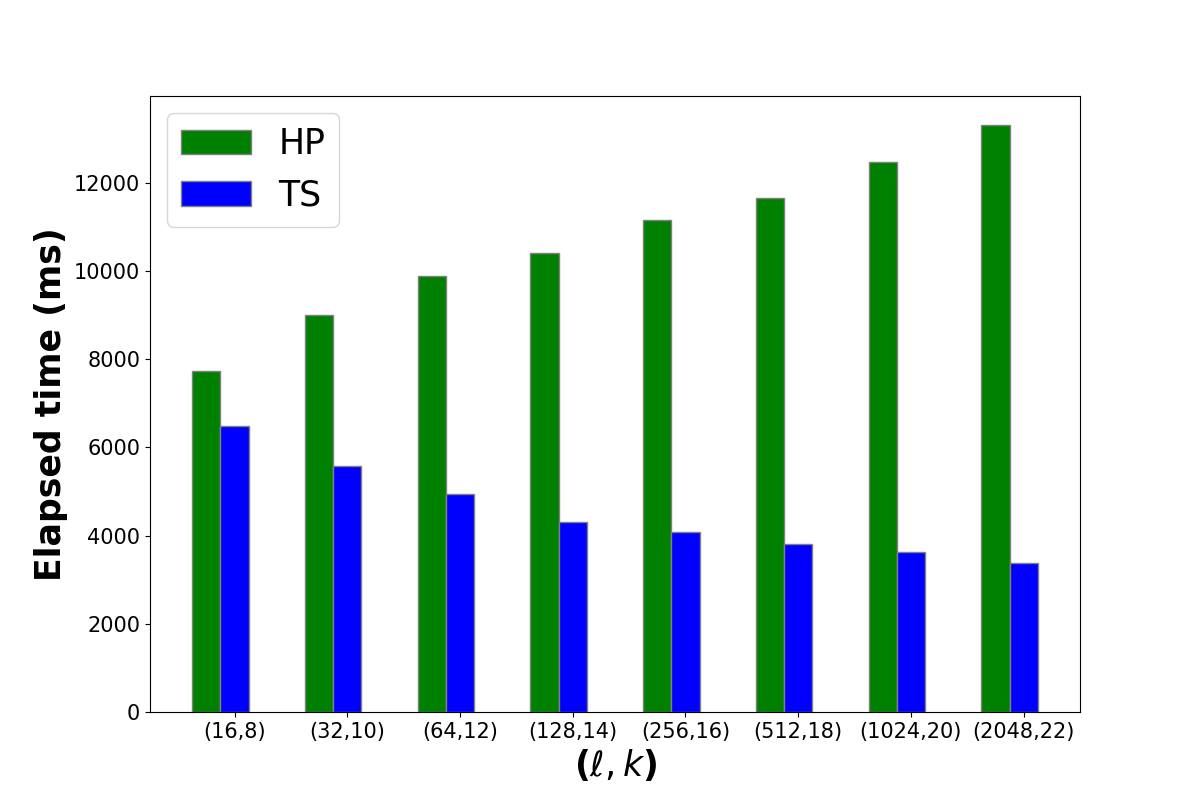}
\caption{Results on the trie.}
\label{fig:subim2}
\end{subfigure}

\caption{The results on the \textsf{EFM} dataset with $z=32$.}
\label{fig:EFM}
\end{figure}

\begin{figure}[t]

\begin{subfigure}{0.49\textwidth}
\includegraphics[width=1\linewidth]{./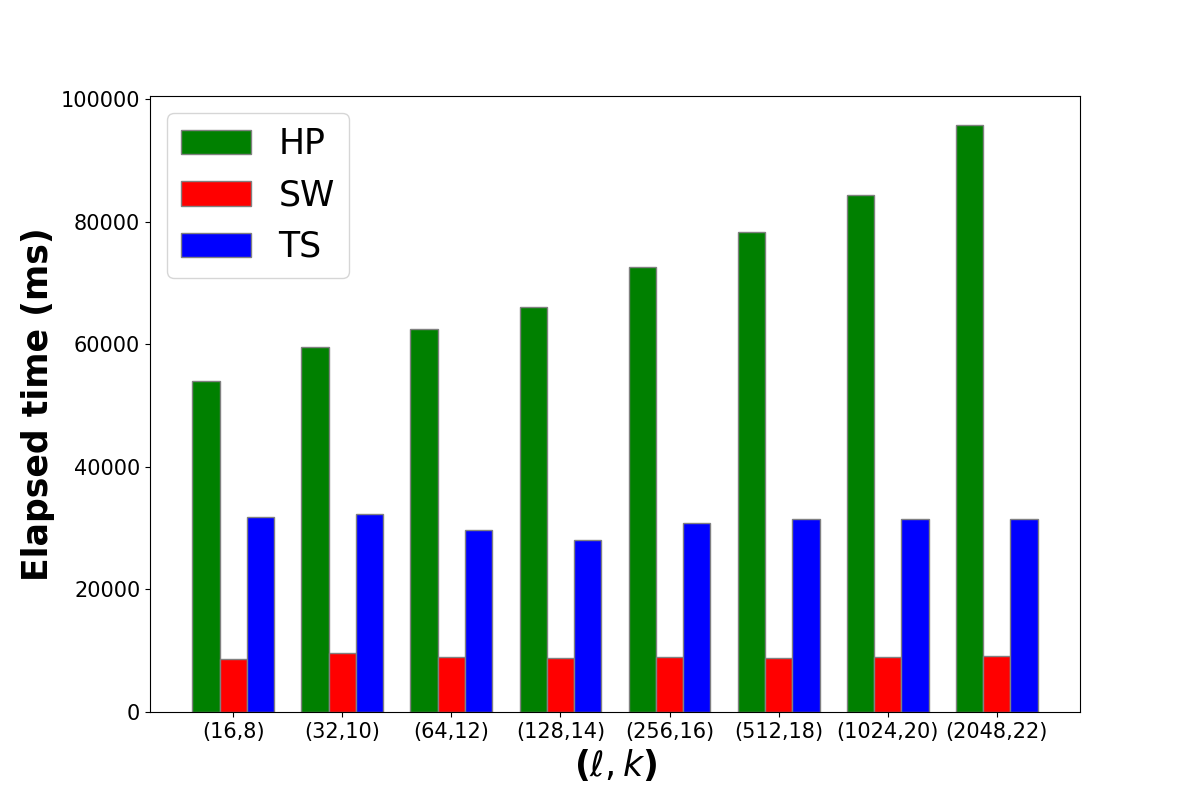} 
\caption{Results on the $z$-estimation.}
\label{fig:subim3}
\end{subfigure}
\begin{subfigure}{0.49\textwidth}
\includegraphics[width=1\linewidth]{./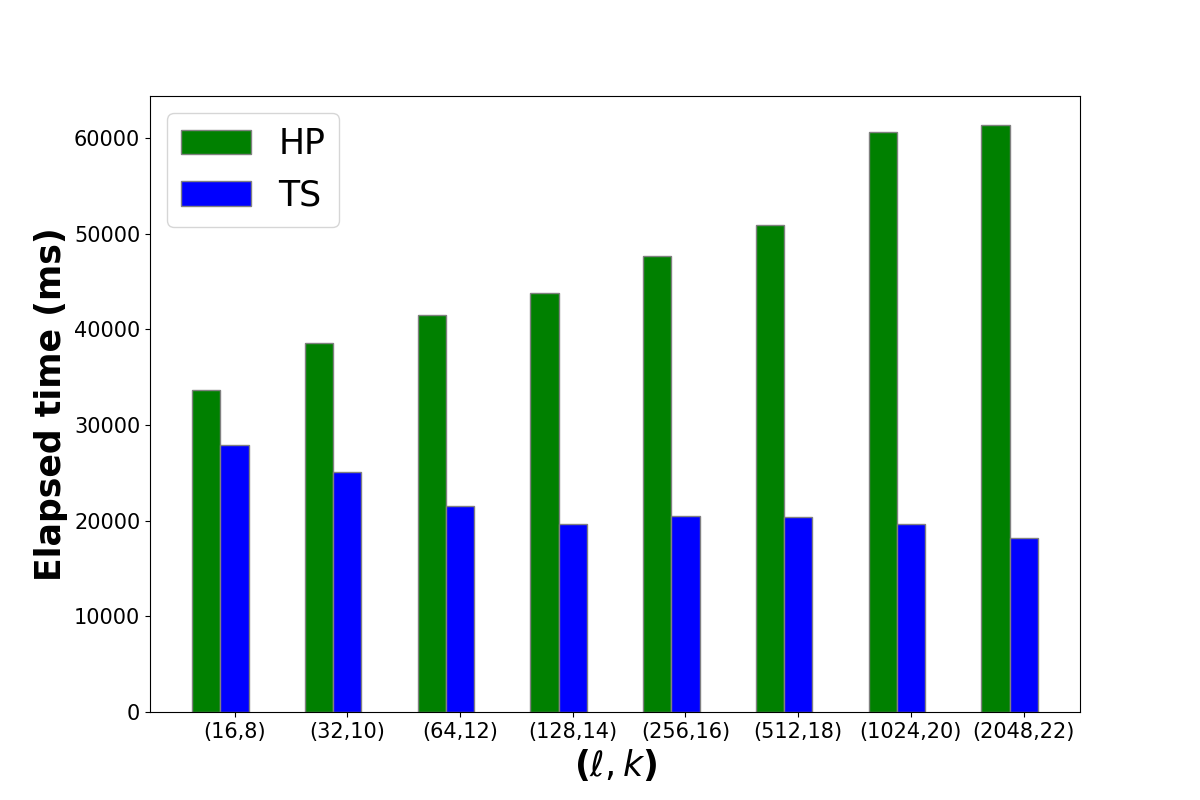}
\caption{Results on the trie.}
\label{fig:subim4}
\end{subfigure}

\caption{The results on the \textsf{HUMAN} dataset with $z=8$.}
\label{fig:HUMAN}
\end{figure}

\bibliographystyle{plain}
\bibliography{references}

\appendix

\section{Omitted Experimental Results}\label{app:SARS}

We have also used the following dataset, which we denote by \textsf{SARS}: the full genome of SARS-CoV-2 (isolate Wuhan-Hu-1)\footnote{\url{https://www.ncbi.nlm.nih.gov/nuccore/MN908947.3}} combined with a set of single
nucleotide polymorphisms (SNPs)\footnote{\url{https://pmc.ncbi.nlm.nih.gov/articles/PMC8363274/\#supp2}} taken from $1,181$ samples~\cite{SARS}.
We have used $z=1024$ for this dataset.

\begin{figure}[ht]
\begin{subfigure}{0.49\textwidth}
\includegraphics[width=1\linewidth]{./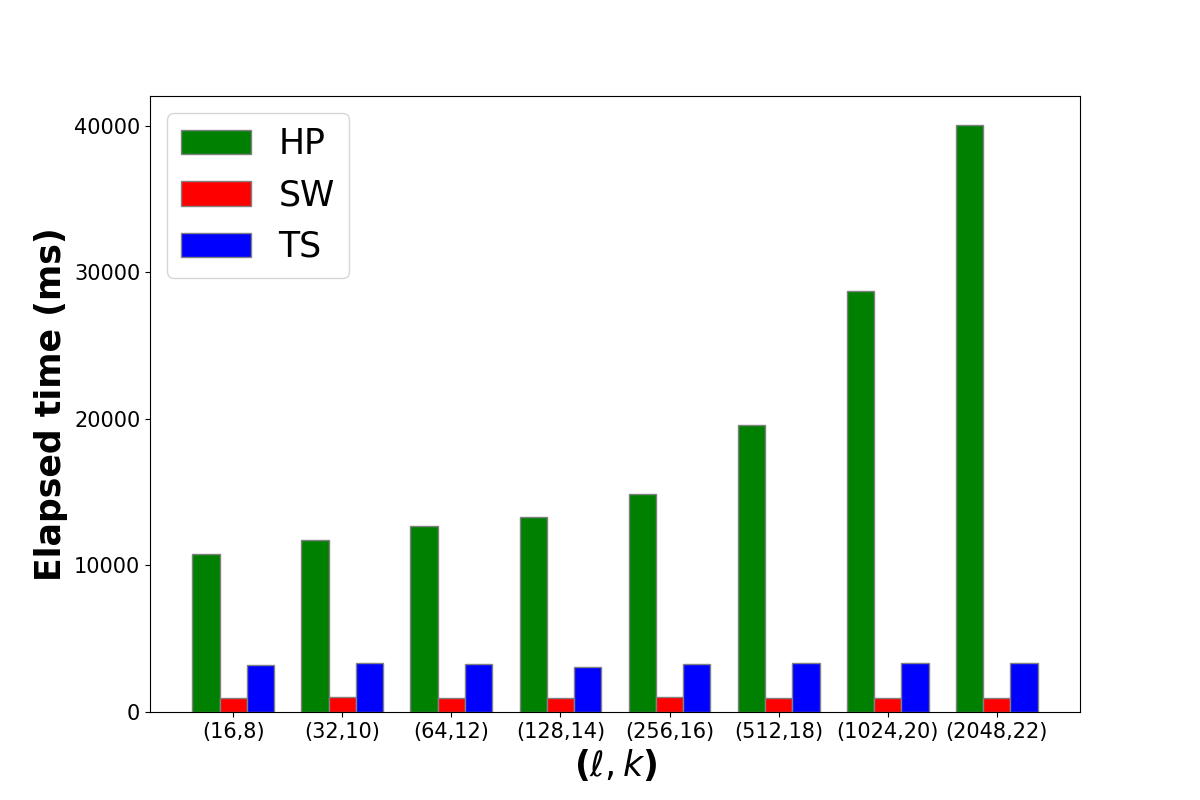} 
\caption{Results on the $z$-estimation.}
\label{fig:subim5}
\end{subfigure}
\begin{subfigure}{0.49\textwidth}
\includegraphics[width=1\linewidth]{./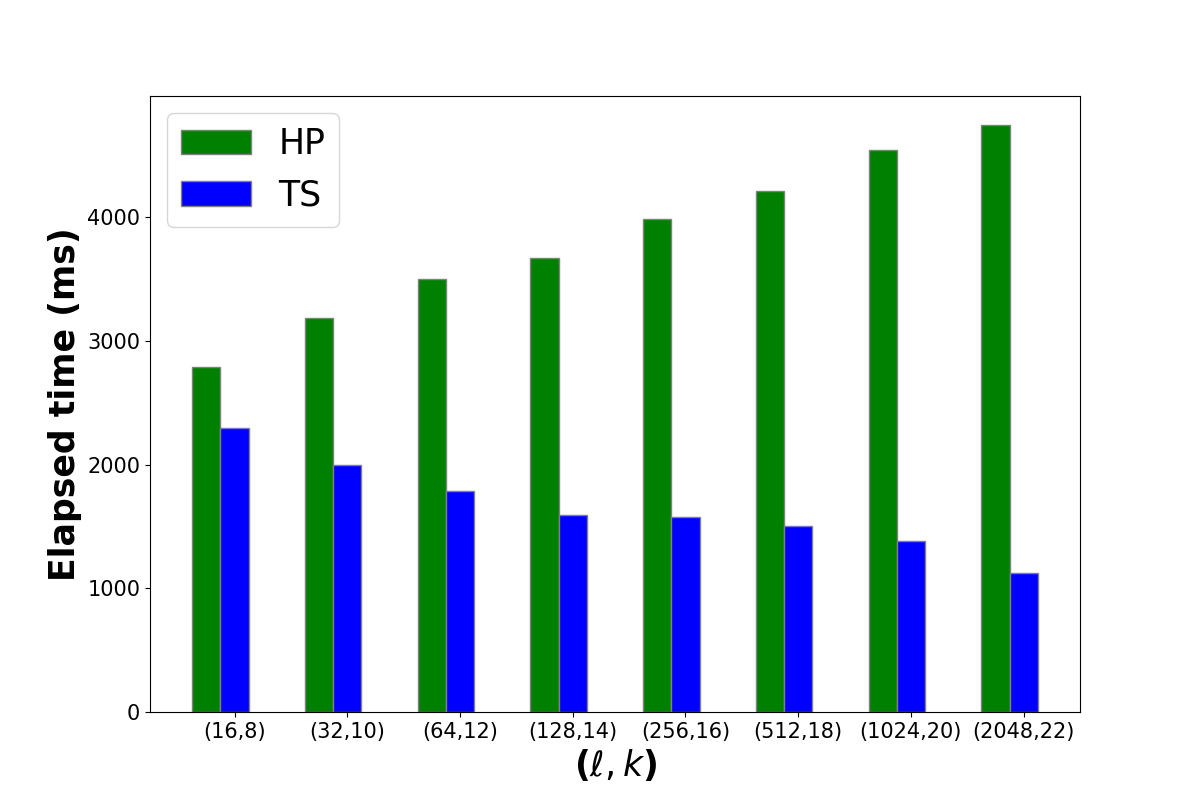}
\caption{Results on the trie.}
\label{fig:subim6}
\end{subfigure}

\caption{The results on the \textsf{SARS} dataset with $z=1024$.}
\label{fig:SARS}
\end{figure}

\end{document}